\definecolor{hypercolor}{RGB}{9,80,162}
\def\cA{{\mathcal A}}
\def\cF{{\mathcal F}}
\def\cM{{\mathcal M}}
\def\cW{{\mathcal W}}
\def\cK{{\mathcal K}}
\def\beq{\begin{eqnarray}}
\def\eeq{\end{eqnarray}}
\def\pa{\partial}
\def\at{\left(}               		   
\def\aq{\left[}               		   
\def\ag{\left\{}
\def\ct{\right)}              		   
\def\cg{\right\}}             		   
\def\d{\mathrm{d}}
\def\e{\mathrm{e}}
\newtheorem{theorem}{Theorem}[section]
\newtheorem{corollary}
[theorem]{Corollary}
\newtheorem{proposition}[theorem]{Proposition}
\newtheorem{lemma}[theorem]{Lemma}
\theoremstyle{definition}
\newtheorem{remark}[theorem]{Remark}
\renewcommand{\vec}[1]{\mathbf{#1}}
\newcommand{\eq}{eq.\@ }
\newcommand{\eqs}{eqs.\@ }
\newcommand{\Eq}{Eq.\@ }
\newcommand{\expvalom}[1]{\expval{#1}_\omega}
\newcommand{\expvalv}[1]{\expval{#1}_0}
\begin{document}

\par 
	\bigskip 
	\LARGE 
	\noindent 
	{\bf Linear stability of semiclassical theories of gravity} 
	\bigskip 
	\par 
	\rm 
	\normalsize 
	
\large
\noindent
{\bf Paolo Meda$^{1,3,a}$}, {\bf Nicola Pinamonti$^{2,3,b}$*}  \\
\par

\small
\noindent$^1$ \textit{Dipartimento di Fisica, Universit\`a di Genova, Italy}.

\smallskip

\noindent$^2$ \textit{Dipartimento di Matematica, Universit\`a di Genova, Italy}.

\smallskip

\noindent$^3$ \textit{Istituto Nazionale di Fisica Nucleare - Sezione di Genova, Italy}.

\smallskip

\noindent * corresponding author.

\smallskip

\newcommand{\myat}{{\fontfamily{pag}\selectfont @}}

\noindent E-mail:
\href{paolo.meda@ge.infn.it}{$^a$paolo.meda\myat ge.infn.it},
\href{pinamont@dima.unige.it}{$^b$pinamont\myat dima.unige.it}

\normalsize

\par

\rm\normalsize

\par 
\bigskip

\rm\normalsize 

\bigskip

\noindent 
\small 
{\bf Abstract.} The linearization of semiclassical theories of gravity is investigated in a toy model, consisting of a quantum scalar field in interaction with a second classical scalar field which plays the role of a classical background. 
This toy model mimics also the evolution induced by semiclassical Einstein equations, such as the one which describes the early universe in the cosmological case. The equations governing the dynamics of linear perturbations around simple exact solutions of this toy model are analyzed by constructing the corresponding retarded fundamental solutions, and by discussing the corresponding initial value problem. It is shown that, if the quantum field which drives the back-reaction to the classical background is massive, then there are choices of the renormalization parameters for which the linear perturbations with compact spatial support decay polynomially in time for large times, thus indicating stability of the underlying semiclassical solution.

\bigskip

\section{Introduction}
\label{sec:intro}

In semiclassical theories of gravity, the back-reaction of quantum matter fields is studied by means of the so-called semiclassical Einstein equations, where the Einstein tensor of a four-dimensional spacetime $(\cM,g)$ is equated to the expectation value of the stress-energy tensor of matter fields evaluated on a quantum state $\omega$. Any couple formed by a spacetime and a quantum state satisfying these equations constitutes a solution in semiclassical gravity. The question about existence and uniqueness of solutions was analyzed in several physical scenarios and with different approaches, which take advantage of recent developments in the study of locally covariant quantum field theories on globally hyperbolic spacetimes \cite{Dappiaggi2008cosm,Pinamonti2011init,Pinamonti2015glob,Hack2016cosm,Juarez-Aubry2019desitt,Juarez-Aubry2021stat,Sanders2020stat,Gottschalk2021cosm,Meda2021exi,Gottschalk2022sitter,Gottschalk2022num,Juarez-Aubry2022conf} (see also \cite{Meda2021evap} for an application of the semiclassical Einstein equations in black hole evaporation). In this framework, the renormalization of quadratic fields like the Wick square or the stress-energy tensor $T_{\mu\nu}$ entering the semiclassical models is always guaranteed when sufficiently regular states are taken into account. This is the case of Hadamard states, which fulfil the microlocal spectrum condition and possess two-point functions which have an universal singular structure \cite{Radzikowski1996micro,Brunetti1996wick}.
Hence, in Hadamard states
Wick observables can be constructed by a covariant point-splitting regularisation which removes the universal singular part, and generalises the usual normal-ordering prescription. For further details about this topic we refer to \cite{Wald1977back,Kay1991uniq,Brunetti2000micro,Hollands2001wick,Hollands2002top,Hollands2005stress} (see also \cite{Hollands2015qft} for a recent review and some physical applications). 

The study of the stability of solutions of semiclassical equations remains problematic even on this class of states and even at linear order, because of the presence of higher-order derivative terms in the expectation value of the quantum stress-energy tensor. In the past years, the issue about stability of solutions of the semiclassical Einstein equations was addressed in several works \cite{Horowitz1978inst,Horowitz1980wf,Kay1981see,Yamagishi1982inst,Jordan1987stab,Suen1989unst,Matsui20inst}. 
It was argued that, because of the higher order derivative terms,
linearized semiclassical Einstein equations around chosen backgrounds
admit exponentially growing solutions. These exponentially growing linear perturbations are called runaway solutions and 
indicate that the chosen background is unstable. It is remarkable that runaway solutions are present even on flat spacetimes.
A prescription of reduction of order was presented in \cite{Simon1991stab,Parker1993red,Flanagan1996back} to eliminate runaway unstable solutions, whereas a criterion for the validity of semiclassical gravity in the linear regime was proposed in \cite{Anderson2003lr}. More recently, the stability of semiclassical solutions has been treated in the framework of the so-called stochastic gravity (see \cite{Hu2020stoc} and references therein). 
The irregularity issues of the semiclassical Einstein equations were also deeply studied in \cite{Meda2021exi} in the case of cosmological spacetimes, for arbitrary values of the coupling-to-curvature parameter. In this case, higher order derivatives of the metric appear, and furthermore the expectation value of the traced stress-energy tensor contains a non local quantum contribution at the linear order in the perturbative potential. This term represents the source of instability of the model, and it forbids to solve the semiclassical equations in a direct way. However, it was proved that unique local solutions exist after applying to the semiclassical equation an inversion formula associated to that unbounded operator. 

\bigskip

In this paper, we shall take inspiration from the ideas presented in \cite{Randjbar-Daemi980ren} and \cite{Juarez-Aubry2020init}, and we analyse the stability issue at linear order in a simple semiclassical toy model in flat spacetime, consisting of a classical background scalar field $\psi$ coupled to a quantum free scalar field $\phi$. 
The system of equations governing the dynamics is displayed in \eq \eqref{eq:system-semicl}, and the back-reaction of the quantum field $\phi$ on $\psi$ is estimated by substituting the 
classical field $\phi^2$ 
with $\expvalom{\phi^2}$, the expectation value of the quantum Wick square in the quantum state $\omega$.
With this picture in mind, we are interested in analyzing the stability of the solutions of this semiclassical system against linear perturbations. To this end, we discuss the equation obtained by linearizing the semiclassical equations over full solutions, which are formed by a quantum state $\omega$ for the quantum field $\phi$ and by a classical background $\psi_0$ for the classical field $\psi$. 
For the sake of simplicity, in this toy model, the state which is chosen is the Minkowski vacuum, however, similar results hold with other choices for $\omega$.
The obtained equation \eqref{eq:SCE-psi1} is a linear equation for the perturbations $\psi_1$ over the background classical field $\psi_0$, and in some cases it gives origin to a well posed initial value problem for spatially compact solutions 
despite the presence of certain unavoidable non local contributions in it. 
These non local contributions are manifest in the linear equation for $\psi_1$ written in \eq \eqref{eq:SCE-eq1-hom}, or in the corresponding 
non-homogeneous equation \eqref{eq:SCE-eq1-source} obtained when a smooth compactly supported source $f$ for $\psi_1$ is considered.

In a first step, we prove that this equation is of hyperbolic nature. We then explicitly construct the retarded fundamental solutions $D_R$. Out of it we write the most general retarded solution of \eq \eqref{eq:SCE-eq1-source}, we discuss the well posedness of the corresponding initial value problem, and we study the decay of these solutions for large times.

Notably, there are several choices of the parameters governing the dynamics for which every linearized solution having smooth compactly supported initial values, 
or emerging from a compactly supported source, decays polynomially in time for large times, thus showing that perturbations tend to disperse, or better to disappear in time. This result of stability is ensured by both the non-vanishing mass of the quantum field, by the spatial support of the initial data and the source.
On the contrary, in agreement with previous observations \cite{Horowitz1978inst,Horowitz1980wf,Jordan1987stab}, if the quantum field $\phi$ is massless, then solutions of the linearized semiclassical equation which grow exponentially in time may always exist, even if the initial values and the source are of compact support.

The toy model studied in this paper formally mimics both the cosmological semiclassical model studied in \cite{Meda2021exi} and the weak field theory discussed in \cite{Horowitz1980wf}, after interpreting the background solution as a degree of freedom of the metric, and the coupling constants as the renormalization freedoms appearing in a semiclassical theory of gravity. This analogy indicates that such semiclassical theories may have stable solutions at least when the quantum fields describing matter are massive. 

\bigskip
This paper is organized as follows, in the \hyperref[sec:model]{Section 2} we introduce the interacting toy model considered in this paper, and we discuss how to obtain a linearized semiclassical equation governing the back-reaction of the quantum field on the background. In the first part of the \hyperref[sec:stability]{Section 3}, we analyze the linearized equation, we construct its retarded fundamental solution and we show how to formulate a well-posed initial value problem out of it. In the last part of this section, we present a correspondence between the presented toy model and the semiclassical cosmological problem. Finally, the main results of the paper are summarized in  \hyperref[sec:conc]{Section 4}. \hyperref[sec:appendix]{Appendix A} contains some technical results about the decay at large time of certain functions. 

\section*{Notations}

In this paper, the units convention is $G = c = 1$, and the Lorentzian signature of the spacetime $(\cM,g)$ is $(-,+,+,+)$. Thus, the d'Alembert operator reads $\square_x = g^{\mu\nu} \nabla_\mu \nabla_\nu = -\pa^2_t + \Delta_\vec{x}$, where $\Delta_\vec{x}$ denotes the spatial Laplace-Beltrami operator.
We shall employ the following conventions on the Fourier transforms:
\begin{gather*}
	\hat{f}(p) = \cF\{f\}(p) = \int_{\mathbb{R}^4} f(x) \e^{i p \cdot x} \d^4 x, \qquad f(x) = \cF^{-1}\{\hat{f}\} = \frac{1}{(2\pi)^4} \int_{\mathbb{R}^4} \hat{f}(p) \e^{-i p \cdot x}\d^4 p, \\
	\tilde{f}(t,\vec{p}) = \cF_{\vec{x}}\{f\}(t,\vec{p}) = \int_{\mathbb{R}^3} f(t,\vec{x}) \e^{i \vec{p} \cdot \vec{x}} \d^3 \vec{x}, 
\end{gather*}
respectively. 
Hence, the convolution theorems read as $\cF \{ f * g \} = \cF \{f\} \cF\{ g\}$
where $*$ denotes the convolution operator such that $f * g = \int f(x-y) g(y) \d y$.

\section{The semiclassical model}
\label{sec:model}

In this work, we study the coupling between a quantum scalar field $\phi$ and another classical background scalar field $\psi$ in the Minkowski spacetime $(\cM,\eta)$.
The equations of motion of the corresponding free theory are  
\begin{subnumcases}{}
	\square \phi  - m^2 \phi =  \lambda \psi \phi, \label{eq:system-eq1} \\
	g_2 \square \psi   -  g_1 \psi   =  \lambda_1 \phi^2  -\lambda_2 \square \phi^2, \label{eq:system-eq2}
\end{subnumcases}
where $g_1,g_2,\lambda,\lambda_1,\lambda_2$ denote the real coupling constants of the theory. With the choice of $\lambda_2=0$, the system is Lagrangian. On the other hand, if $\lambda_2 \neq 0$, then there is sufficient freedom in the definition of the coupling constants to fix $\lambda_2 = 1$.
However, we do rather not impose any further constraint on the coupling constants, in order to describe as many semiclassical theories as possible. In particular, keeping $\lambda_2 \neq 0$, \eq \eqref{eq:system-eq2} mimics the form of equations arising in the study of semiclassical Einstein equations, as we shall discuss in \hyperref[sec:cosmo]{subsection 3.4}. The first equation \eqref{eq:system-eq1} is the equation of motion of the linear, Klein-Gordon like massive field $\phi$, where $m$ is the mass, and $\lambda \psi$ plays the role of an external potential. The quantization of $\phi$ follows straightforwardly once the external potential field $\psi$ is known. To this avail, we shall consider the unital $*-$algebra of field observables $\mathcal{A}$ generated by the identity $\mathbb{I}$ and the abstract smeared field $\phi(f)$, with $f \in C^{\infty}_0(\cM)$ any compactly supported smooth function \cite{Brunetti2015aqft,Haag2012local}.
The product in this $*-$algebra encodes the canonical commutation relations
\[
	[\phi(f_1),\phi(f_2)] = i \hbar \Delta(f_1,f_2) \mathbb{I}, \qquad f_1, f_2 \in C^{\infty}_0(\cM),
\]
given in terms of the causal propagator $\Delta \doteq \Delta_R - \Delta_A$, i.e., the difference of the retarded and advanced propagators uniquely obtained as fundamental solutions of \eq \eqref{eq:system-eq1}. The algebra of field observables can be enlarged to contain also Wick powers like $\phi^2$  when a properly defined normal ordering procedure is taken into account \cite{Hollands2001wick,Hollands2005stress,Moretti2003stress}. 

In light of the quantum nature of $\phi$, \eq \eqref{eq:system-eq2} can be interpreted in the semiclassical approximation, namely by taking the expectation values of $\phi^2$ in a suitable quantum state. More precisely, once a state $\omega$ on $\mathcal{A}$ is chosen for the quantum field theory, we have that
\[
	g_2 \square \psi - g_1 \psi  =  \lambda_1\expvalom{\phi^2} - \lambda_2 \square \expvalom{\phi^2},
\]
where $\expvalom{\phi^2}  \doteq \omega(\phi^2)$ is the expectation value of the properly normal ordered quantum field $\phi^2$ in the quantum state $\omega$. Thus, the semiclassical system corresponding to equations \eqref{eq:system-eq1} \eqref{eq:system-eq2} turns out to be
\begin{equation}
\label{eq:system-semicl}
	\begin{cases}
		\square \phi  - m^2 \phi =  \lambda \psi \phi, \\
		g_2 \square \psi   -  g_1 \psi   =  \lambda_1 \langle \phi^2\rangle_\omega  -\lambda_2 \square \langle \phi^2\rangle_\omega.
	\end{cases}
\end{equation}
To analyze the linearization of this system around a given simple solution of \eq \eqref{eq:system-semicl} determined by $(\psi_0,\omega)$, we decompose the classical field $\psi$ in two parts, the background contribution $\psi_0$ plus a perturbation $\psi_1$, so that 
\[
	\psi= \psi_0 + \psi_1.
\]
 The state $\omega$, or more specifically its two-point function $\hbar \Delta_{+,\omega}$ appearing in the second equation of the system \eqref{eq:system-semicl} through $\langle \phi^2\rangle_\omega$, can also be decomposed in two parts
\[
\Delta_{+,\omega} = \Delta_{+,\omega,0}+ \Delta_{+,\omega,1},
\]
i.e., the background contribution plus its perturbation. Notice that the linear order contribution in $\Delta_{+,\omega,1}$ is formed by two terms:
one which takes into account the modified evolution equation induced by $\psi_1$ and satisfied by $\Delta_{+,\omega}$, and 
one which consists in $w_1$, a symmetric bi-solution of the zeroth order equation of motion satisfied by $\phi$. 
Since $\Delta_{+,\omega,0}$ is a bi-solution of the same equation, the effect due to $w_1$ can be reabsorbed in a redefinition of the background theory. 
For this reason, in the following, we shall take into account explicitly the effect due to the modified evolution only. To control the linear contribution in $\Delta_{+,\omega,1}$ due to the change of dynamics, we shall use perturbation theory for quantum fields.
As we shall see later, if instead one takes into account the effect of $w_1$ explicitly, an inhomogenous (source) term has to be added to the linearized equation satisfied by $\psi_1$. In fact, this extra term will not alter the discussion we are going to present.

The quantization of $\phi$ is performed on the fixed background $\psi_0$ by considering the $*-$algebra of field observables $\mathcal{A}$ with a product that implements the canonical commutation relations emerging from to the zeroth-order equation  
\begin{equation}\label{eq:linear}
\square \phi  - m^2 \phi -  \lambda \psi_0 \phi = 0.
\end{equation}
Furthermore, both the quantum state $\omega$ and the background field $\psi_0$ satisfy the semiclassical equation, 
namely the second equation in the system given in \eq \eqref{eq:system-semicl}, which thus represents a constraint for the couple $(\psi_0,\omega)$. We recall here that the expectation value of $\phi^2$ in the state $\omega$ on this background theory is obtained by an ordinary point-splitting regularisation procedure 
\begin{equation}
\label{eq:point-splitting}
	\expvalom{\phi^2}^{(\text{bac})}(x) = \lim_{y\to x} \hbar\left( \Delta_{+,\omega}(x,y)-H(x,y) \right)+ c m^2 +  c\lambda\psi_0(x),
\end{equation}
where $\hbar \Delta_{+,\omega}$ is the two-point function of the state $\omega$, $H(x,y) \doteq u(x,y)/(x-y)^2 + v(x,y)\log ((x-y)^2/\mu^2)$ is the universal divergent contribution present in any Hadamard two-point function, with $\mu$ a fixed length scale. Furthermore, $c$ is a constant which encodes the regularisation freedom present in the construction of Wick powers like $\phi^2$ \cite{Hollands2001wick,Hollands2002top}. 

Notice that the influence of $\psi_1$ on $\phi$ is governed by the first equation in the system \eqref{eq:system-semicl}; for a given $\psi_1$, this equation descends from a Lagrangian for $\phi$.
In this Lagrangian, $\psi_1$ acts as a mass perturbation for $\phi$, hence we may use Lagrangian methods to analyze the influence of $\psi_1$ to $\phi$ even if the full system formed by both equations in \eq  \eqref{eq:system-semicl} is not Lagrangian.
Thanks to the principle of perturbative agreement (see e.g. \cite{Hollands2005stress, Drago2017pertagr}) this approach is equivalent to directly analyze the effect of $\psi_1$ on the two-point function, as it was done, e.g., in \cite{Pinamonti2011init,Pinamonti2015glob,Meda2021exi}, or to evaluate $\langle \phi^2 \rangle_\omega$ in \cite{Eltzner2011back} in cosmological spacetimes. Thus, we pass to analyze the influence of the perturbation $\psi_1$ on $\phi$ 
by means of perturbation theory considering the interaction Lagrangian
\begin{equation}
\label{eq:Lagrangian-int}
	\mathcal{L}_I = -\frac{\lambda}{2} \psi_1 \phi^2.
\end{equation}
A perturbative construction of interacting quantum field theories on a generic curved spacetime $(\cM,g)$ was rigorously formulated in a local and covariant way in the framework of perturbative algebraic quantum field theory - \emph{cf.} \cite{Brunetti2000micro,Duetsch2001aqft,Hollands2001wick,Hollands2002top,Fredenhagen2014kms,Fredenhagen2016paqft,Gere2016anly} to which we refer for further details on the construction we are going to present.
In particular, the expectation value of the interacting field $\phi^2$ in the state $\omega$ is obtained by means of the Bogoliubov map
\[
\langle \phi^2\rangle_\omega = \omega(R_V(\phi^2)),
\]
where the perturbative potential 
\[
V \doteq \int_\cM \mathcal{L}_I(x) f(x) \d^4 x = -\frac{\lambda}{2} \int_\cM \phi^2(x) \psi_1(x) f(x) \d^4 x
\]
is obtained by smearing the interaction Lagrangian given in \eq \eqref{eq:Lagrangian-int} with a smooth cut-off $f \in C^\infty_0(\cM)$ which is equal to $1$ on the compact spacetime region where we want to test the semiclassical equation. This cut-off $f$ is eventually removed by considering a suitable limit in which $f$ tends to $1$ on every point of $\mathcal{M}$.

The Bogoliubov map $R_V$ is used to represent local field observables of the interacting theory as formal power series in $\lambda$, whose coefficients are well defined elements of $\mathcal{A}$, the extended algebra of free fields.
In particular, 
\begin{equation}
\label{eq:Bogoliubov}
	R_V(\phi^2) = S(V)^{-1} T(S(V)\phi^2),
\end{equation}
where $T$ is the map which realizes the time ordering, while $S(V)$ is the time ordered exponential of the smeared interaction Lagrangian $V$, namely
\[
	S(V) = T \at \exp(\frac{i}{\hbar} V) \ct = \sum_{n \geq 0} \frac{i^n}{\hbar^n n!} T \at V^{n} \ct.
\]
For the precise construction of the time ordering map $T$ we refer to \cite{Brunetti2000micro,Hollands2001wick,Hollands2002top}, where the old construction of Epstein and Glaser in \cite{Epstein1973loc} is generalized to a generic curved spacetime. 

\medskip

With the perturbative construction of $\expvalom{\phi^2}$ at disposal, the perturbative expansion of $\expvalom{\phi^2}$ in the interacting theory obtained by means of the Bogoliubov map reads at the linear order in $V$ as   
\[
	\expvalom{\phi^2} = \expvalom{\phi^2}^{\text{(bac)}} + \expvalom{\phi^2}^{\text{(lin)}} + \dots,
\]	 
where the contributions higher than the linear one are not displayed. Here,
\begin{equation}
\label{eq:phi2-bac-lin} 
	\expvalom{\phi^2}^{\text{(bac)}} \doteq \omega(\phi^2), \qquad \expvalom{\phi^2}^{\text{(lin)}} \doteq \frac{i}{\hbar} \at \omega( T\at V \phi^2 \ct) - \omega(V \phi^2) \ct.
\end{equation}	
The factor $i$ at the right hand side of the second equation in \eq \eqref{eq:phi2-bac-lin} is present because $S(V)$ in \eq \eqref{eq:Bogoliubov} is formally unitary.  As expected by the principle of perturbative agreement, one can notice by direct computation that  $\expvalom{\phi^2}^{\text{(lin)}}$
matches the linear contribution 
obtained in \cite{Pinamonti2011init,Pinamonti2015glob, Gottschalk2021cosm, Meda2021exi} in the cosmological case.

The linearization studied in this paper consists of studying 
the semiclassical theory described by the system of equations given in \eq \eqref{eq:system-semicl}, where the expectation value of the Wick square is approximated by truncating at first order the formal power series in the interaction Lagrangian \eqref{eq:Lagrangian-int} occurring in the map $R_V$ defined in \eq \eqref{eq:Bogoliubov}. 
The state for the interacting quantum theory is constructed by means of the state on the linear quantum theory, i.e., on $\cA$, and it is fixed once and forever, no matter the form of the linear perturbation $\psi_1$.

Hence, on the one side the background theory is described by $(\psi_0,\omega)$, in which $\psi_0$ fulfils the semiclassical equation
\begin{equation}
\label{eq:background}
	g_2 \square \psi_0  - g_1 \psi_0  = (\lambda_1 -\lambda_2 \square) \expvalom{\phi^2}^{\text{(bac)}},
\end{equation}
and the the quantum state $\omega$ constrained  by \eq \eqref{eq:background} is  fixed once and for all in the linear algebra $\mathcal{A}$ of fields satisfying \eq \eqref{eq:linear}. On the other side, the linear perturbation theory of the background is described by the classical field $\psi_1$, which fulfils the following linearized semiclassical equation 
\begin{equation}
\label{eq:SCE-psi1}
	g_2 \square \psi_1 - g_1 \psi_1  = \left( \lambda_1 -\lambda_2 \square\right) \expvalom{\phi^2}^{\text{(lin)}},
\end{equation}
where $\expvalom{\phi^2}^{\text{(lin)}}$ was constructed in \eq \eqref{eq:phi2-bac-lin}. \Eq \eqref{eq:SCE-psi1} is the only equation we have at linear order, and it must be seen as a dynamical equation for the linear perturbation $\psi_1$.
If a perturbation of the state which modifies $\omega$ to $\omega+\omega_1$
is considered explicitly at the linear order, then an inhomogoenous contribution appears in \eq \eqref{eq:SCE-psi1}, which consists in adding $\omega_1(\phi^2(x)) = \hbar w_1(x,x)$ to $\expvalom{\phi^2}^{\text{(lin)}}$. This extra contribution does not depend on $\psi_1$, and hence it acts as a source term in \eq \eqref{eq:SCE-psi1}. We shall see in the next section that also solutions of the inhomogeneous version of \eq \eqref{eq:SCE-psi1} have nice decay properties for several choices of the parameters $a,\lambda, \lambda_i, g_i$.

\section{Stability of solutions of the linearized semiclassical equation}
\label{sec:stability}

With the results presented in \cite{Randjbar-Daemi980ren} in mind, our goal in this section is to show that 
perturbations $\psi_1$ over $(\psi_0,\omega)$ which solves \eq \eqref{eq:system-semicl} decay for large time at linear order in perturbation theory and for proper choices of the coupling constants of the model. To achieve this, it is crucial to assume compactly supported initial data on the fields, and to consider perturbations around  the background solution which have spatial compact support, in order to avoid the class of exponentially growing solutions already seen, e.g., in \cite{Horowitz1978inst,Horowitz1980wf,Jordan1987stab}.

\subsection{Zeroth-order solution}

Before discussing the perturbative construction of $\phi$, we give here some details on the chosen background solution $(\psi_0,\omega)$. For the sake of simplicity, we shall assume that the background $\psi_0\in \mathbb{R}$ is constant, hence the state $\omega$ must be such 
that the Wick square has constant expectation value
\begin{equation}
\label{eq:const}
	\omega(\phi^2) = \expvalom{\phi^2}^{\text{(bac)}}= -\frac{g_1}{\lambda_1} \psi_0 .
\end{equation}
In view of the renormalization freedom present in the definition of the Wick power expressed by the constant $c$ in \eq \eqref{eq:point-splitting}, it is always possible to fulfil the previous equation whenever $\omega$ is a translation invariant state, for any choice of $\psi_0 \in \mathbb{R}$. A constant background external field $\psi_0$ corresponds to a mass renormalization, that is,	
\[
	m_\lambda^2= m^2 +\lambda \psi_0 = m^2 - \lambda \frac{\lambda_1}{g_1} \langle \phi^2\rangle_{\omega}^{\text{(bac)}}. 
\]
As $m_\lambda^2$ denotes the new renormalized mass of the background field $\phi$, we need to impose the constraint that $m_\lambda^2 \geq 0$, otherwise the reference state for the system may not exist. This inequality always holds for sufficiently small $\lambda_1/g_1$, and it holds trivially in the case of vanishing expectation value of the Wick square. Besides, there is always the possibility of setting $\expvalom{\phi^2}^{\text{(bac)}}=0$ by means of the choice of the renormalization of the field $\phi^2$ (the constant $c$ in \eq \eqref{eq:point-splitting}). 
	
To simply further the analysis, we shall select the Minkowski vacuum state $\omega_0$ as reference state on $\cA$, whose corresponding two-point function is 
\[
    \omega_0(\phi(y)\phi(x)) = \expval{\phi(y)\phi(x)}{0} = \hbar \Delta_{+}(y-x), 
\]
where
\begin{equation}
\label{eq:Delta+}
	\Delta_+(y-x) \doteq \frac{1}{(2\pi)^3} \int_{\mathbb{R}^4} \delta(p^2 + m^2)\Theta(p_0) \e^{i p(x-y)} \d^4 p,
\end{equation}
$\Theta$ is the Heaviside step function, and $\delta$ the Dirac delta function. However, other choices of quantum states, such that $\expvalom{\phi^2}$ is regular in $\cM$, do not alter significantly our analysis.

\subsection{The linearized expectation value of the Wick square}

Using the Bogoliubov map given in \eq \eqref{eq:Bogoliubov}, the expectation value of the Wick square can be evaluated at every perturbation order. The linearized contribution defined in \eq \eqref{eq:phi2-bac-lin} in the adiabatic limit ($f =  1$) takes the following form
\begin{equation}
\label{eq:expval-phi-linear}
	\expvalom{\phi^2}^{\text{(lin)}} = -i \hbar \lambda \int_\cM \at \Delta_{F,\omega}^2(y,x) - \Delta_{+,\omega}^2(y,x) \ct  \psi_1(y) \d y,
\end{equation}
where $\hbar \Delta_{F,\omega}(y,x) = \expvalom{T \at  \phi(y)\phi(x) \ct}$ and $\hbar \Delta_{+,\omega}(y,x) = \expvalom{\phi(y)\phi(x)}$ are the Feynman propagator and the two-point function associated to $\omega$, respectively. 
Notice that the definition of the Feynman propagator $\Delta_{F,\omega} \doteq \Delta_{+,\omega}+i\Delta_A$, where $\Delta_A$ is the advanced propagator, employed here differs by a factor $i$ from others constructions. Furthermore, the squares of 
$\Delta_{F,\omega}$ and $\Delta_{+,\omega}$ correspond to the pointwise multiplication of the integral kernels of the two distributions. We shall discuss how to construct these products below. A diagrammatic representation of the propagators in the integrand at the right hand side of \eq \eqref{eq:expval-phi-linear} is given in Figure \ref{Fig:loop}. 

The various propagators of the theory, and in particular those appearing in \eq \eqref{eq:expval-phi-linear}, are in general not invariant under translation. However, they acquire translation invariance when they are referred to the Minkowski vacuum $\omega_0$. To keep this in mind, we shall denote the Feynman propagator and the two-point function referred to the Minkowski vacuum as $\hbar\Delta_F(y-x) = \expval{T \at\phi(y)\phi(x)\ct}{0}$ and $\hbar \Delta_{+}(y-x) = \expval{\phi(y)\phi(x)}{0}$, respectively.
Furthermore, we denote by  
$W \doteq \Delta_{+,\omega} - \Delta_{+}$ and we observe that $W$ is a smooth function whenever $\omega$ is an Hadamard state, namely the singular part of two-point function $\Delta_{+,\omega}$ is the same as the one in the Hadamard parametrix, or, equivalently, $\omega$ fulfils the microlocal spectrum condition \cite{Radzikowski1996micro,Brunetti2000micro}.
Hence, recalling that $\Delta_F=\Delta_++i\Delta_A$, we get
\[
	\Delta_{F,\omega}^2 - \Delta_{+,\omega}^2 = \Delta_{F}^2 - \Delta_{+}^2 + i 2\Delta_A W.
\]

\begin{figure}[!htb] \centering{\resizebox{.60\linewidth}{!}{
\includegraphics[width=.80\linewidth]{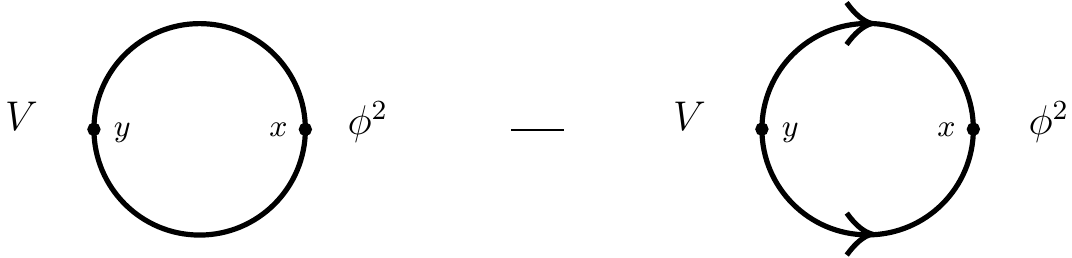}
}}
	\caption{\footnotesize Picture of the one-loop contribution $T \at V \phi^2 \ct - V\phi^2$ at the linear order in $\hbar$. The propagator $\Delta_F(y-x)$ is represented by a non-oriented line because it is symmetric in the exchange of $x \leftrightarrow y$, while the two-point function $\Delta_{+}(y-x)$ by an arrow from $y$ to $x$ \cite{Gere2016anly}.}
	\label{Fig:loop}
\end{figure} 
Since $W$ is smooth, we just need to construct $\Delta_+^2$ and $\Delta_F^2$ to give meaning to the right hand side of \eq \eqref{eq:expval-phi-linear}.
Contrary to $\Delta^2_{+}$ which is well defined everywhere, $\Delta_F^2$ is a well defined distributions only for test functions which are not supported on the origin. Therefore, a renormalization (extension) procedure is required to have a well defined $\Delta_F^2$ also in this case. The extension of $\Delta_F^2$ on test functions supported on the origin can be obtained keeping fixed the Steinmann scaling degree \cite{Steinmann1971pert,Brunetti2000micro}; however, this extension is not unique, and the remaining freedom amounts to an additional $c_0 \delta$, where $c_0$ is a real parameter. This freedom is compatible with the ambiguity present in the definition of $\phi^2$ \cite{Hollands2001wick}. To obtain explicit expressions of 
$\Delta_F^2$ and $\Delta_{+}^2$, we make use of the known K\"{a}llen-Lehmann spectral representations. In particular, using the convolution theorem and the definition of $\Delta_+(x)$ given in \eq \eqref{eq:Delta+}, we obtain that
\[
\Delta_{+}^2(x) = \int_{4m^2}^{\infty} \d M^2 \varrho(M^2) \Delta_+(x,M^2),
\]
where the spectral density $\varrho(M^2) = 
\frac{1}{(2\pi)^3}\int \frac{d^3{\mathbf{p}}} {\sqrt{\mathbf{p}^2+m^2}}
\delta(M-2\sqrt{\mathbf{p}^2+m^2})
=
\frac{1}{16\pi^2}\sqrt{1- \frac{4m^2}{M^2}}$,
and $\Delta_+(x,M^2)$ is the Minkowski vacuum two-point function given in \eq \eqref{eq:Delta+} with mass $M$. On the other hand, the very same representation for $\Delta_F^2(x)$ cannot be given because the integral over $M^2$ is divergent in that case. However, using the fact that for $x\neq0$, 
$(\Square+a)\Delta_F(x,M^2) = 
(M^2+a)\Delta_F(x,M^2)$, for $x\neq 0$ and $a>-4m^2$ we get
\[
    \Delta_{F}^2(x) = (\square+a)\int_{4m^2}^{\infty} \d M^2 \frac{\varrho(M^2)}{M^2+a} \Delta_F(x,M^2).
\]
This expression coincides with $\Delta_+^2(x)$ for $x \notin J^-(0)$ and with $\Delta_+^2(-x)$ for $x \notin J^+(0)$. Moreover, it is well defined also on functions whose support contains $0$, and thus it represents one of the possible extensions of $\Delta_F^2\in \mathcal{D}'(\cM\setminus \{0\})$ to $\mathcal{D}'(\cM)$. Also, the difference of two $\Delta_F^2$ constructed with $a$ and $a'\neq a$ is a non vanishing distribution supported in the origin with scaling degree $4$, and hence it must be proportional to the Dirac delta function. Finally, we can saturate the freedom in the construction of $\Delta_F^2$ with various choices of $a$. In light of this, the constant $a$ encodes the renormalization freedom present in the construction of $\Delta_F^2(x)$. Therefore, recalling that $\Delta_F(x)=\Delta_+(x) +i \Delta_A(x)$, the most general form of $\Delta_F^2(x) - \Delta_{+}^2(x)$ reads as
\begin{equation}
\label{eq:DeltaF2-Delta+2-dM2}
	\begin{aligned}
		-i \at \Delta_F^2(x) - \Delta_{+}^2(x) \ct =  (\square + a) \int_{4m^2}^{\infty} \d M^2 \frac{\varrho(M^2)}{M^2 + a} \Delta_A(x,M^2),
	\end{aligned}
\end{equation} 
where $\Delta_A(x,M^2)$ is the advanced propagator of the Klein-Gordon field of mass $M$. A detailed  derivation of \eq \eqref{eq:DeltaF2-Delta+2-dM2} for the massless case can be found in Appendix C of \cite{Duetsch2004causpert}. Hence, \eq \eqref{eq:expval-phi-linear} can be written at the linear order outside $x=0$ as 
\begin{equation}
\label{eq:expval-phi-linear-conv}
	\expvalom{\phi^2}^{\text{(lin)}} = {\hbar} \lambda \at \cK_a +  \cW \ct (\psi_1),
\end{equation}
where the operator $\cK_a$ maps compactly supported smooth function (or Schwartz function) to smooth functions. Moreover, its regularized integral kernel takes the form
\begin{equation}
\label{eq:kernel-K}
	\cK_a(x-y) = \int_{4m^2}^{\infty} \d M^2 \varrho(M^2) \frac{1}{M^2+a}(\square+a)\Delta_R(x-y,M^2) ,
\end{equation} 
where the d'Alembert operator is taken in the distributional sense, and $\Delta_R(\cdot,M^2)$ is the retarded propagator of the Klein Gordon equation with mass $M$. Its spatial Fourier kernel reads as
\begin{equation}
\label{eq:retarded-Fourier}
	\tilde{\Delta}_R(t,\vec{p},M^2) =  -\frac{\sin(\omega_0 t)}{\omega_0} \Theta(t),
\end{equation}
where $\omega_0 = \sqrt{|\vec{p}|^2 + M^2}$, and $\Theta$ is the Heaviside step function. The operator $\cW$ maps compactly supported smooth functions to smooth functions, and its integral kernel is the pointwise multiplication of the advanced propagator with the smooth part of the two-point function $W$, i.e., $\cW \doteq 2 \Delta_A W$. 

With the choice of the Minkowski vacuum as reference state, both $W$ and $\cW$ vanish, and thus the linearized expectation value of the Wick square given in \eq \eqref{eq:expval-phi-linear-conv} simplifies as
\begin{equation}
\label{eq:expval-phi-linear-vacuum}
	\expvalv{\phi^2}^{\text{(lin)}}= \hbar \lambda \; \cK_a(\psi_1).
\end{equation}
For later purposes, we need to control the evolution of $\expvalv{\phi^2}^{\text{(lin)}}$ in time under the influence of $\psi_1$: to this end, we study the kernel given in \eq \eqref{eq:kernel-K} in the Fourier domain by means of the following proposition. 
\begin{proposition}
	\label{prop:state}
	Let $\psi_1 \in \mathcal{S}(\cM)$ be a Schwartz function on $\mathcal{M}$, and let $\hat{\psi}_1$ be its Fourier transform.
	Then the Fourier transform of the linearized expectation value of the Wick square given in \eq \eqref{eq:expval-phi-linear-vacuum} can be written as
	\begin{equation}
	\label{eq:expval-phi-linear-vacuum-transf}
		\cF \ag \expvalv{\phi^2}^{\text{\normalfont(lin)}} \cg (p_0,\vec{p}) = \lim_{\epsilon\to 0^+}
		\frac{\lambda \hbar}{16\pi^2} F_a(-(p_0-i\epsilon)^2+|\vec{p}|^2) 
		\hat\psi_1(p_0,\vec{p}),
	\end{equation}
	given for strictly positive mass $m>0$ and for $a>-4m^2$. The function $F_a(z)$ admits the following integral representation:
	\begin{equation}
	\label{eq:F}
		\begin{aligned}
		F_a(z) &= \int_{4m^2}^{\infty}  \sqrt{1- \frac{4m^2}{M^2}} \at \frac{1}{M^2+a} - \frac{1}{M^2+z } \ct \d M^2,
		\end{aligned}
	\end{equation} 
	and it has the following properties:
	\begin{itemize}
	\item[a)] $F_a(z)$ is analytic for $z \in \mathbb{C} \setminus (-\infty,-4m^2]$ and continuous at $z=-4m^2$;
	\item[b)] the domain $F_a(z)$ has a branch cut on $z\in(-\infty,-4m^2)$ because there the imaginary part is discontinuous (the real part is continuous but not differentiable);
	\item[c)] $F_a(a)=0$;
	\item[d)] $F_a(s)$ is real for $s\in \aq -4m^2, \infty\ct$, it is strictly increasing for $s \in \aq -4m^2, \infty\ct$, and it diverges for large $|s|$; 
	\item[e)] The imaginary part of $F_a$ admits the following integral representation: 
	\[
		\text{\normalfont Im} (F_a(z))= \text{\normalfont Im}(z) \int_{4m^2}^{\infty}  \sqrt{1- \frac{4m^2}{M^2}} \at  \frac{1}{|M^2+z|^2 } \ct \d M^2,
	\] 
	it is strictly positive for $\text{\normalfont Im}(z)>0$, and strictly negative for $\text{\normalfont Im}(z)<0$.
	Furthermore, it vanishes for $z\in (-4m^2,\infty)$, and it is discontinuous on $z\in(-\infty,-4m^2)$ (the absolute value is finite).
	\end{itemize}
	Finally, for $z\not\in(-\infty,0)$ and $a>0$, $F_a(z)$ takes the form
	\begin{equation}
	\label{eq:FF}
		\begin{aligned}
			F_a(z) &= 2\sqrt{\frac{z+4m^2}{z}} \log\left(\frac{\sqrt{z+4m^2}+\sqrt{z}}{2m}\right)
		-2\sqrt{\frac{a+4m^2}{a}}\log\left(\frac{\sqrt{a+4m^2}+\sqrt{a}}{2m}\right).
		\end{aligned}
	\end{equation} 
\end{proposition}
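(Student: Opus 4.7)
My plan is to exploit the convolution form of $\cK_a(\psi_1)$ to reduce the formula (3.7) to computing $\hat\cK_a(p)$, and then to read off all the analytic properties of $F_a$ directly from the subtracted spectral integral (3.9).

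First, since \eq \eqref{eq:expval-phi-linear-vacuum} gives $\expvalv{\phi^2}^{\text{(lin)}} = \hbar\lambda\,\cK_a(\psi_1)$ and the kernel \eqref{eq:kernel-K} of $\cK_a$ is translation invariant in $x-y$, the convolution theorem yields $\cF\{\expvalv{\phi^2}^{\text{(lin)}}\}(p) = \hbar\lambda\,\hat\cK_a(p)\,\hat\psi_1(p)$. In the Notations convention, the Fourier transform of the retarded propagator is $\hat\Delta_R(p,M^2) = \lim_{\epsilon\to 0^+}[(p_0-i\epsilon)^2 - |\vec p|^2 - M^2]^{-1}$, which I would verify directly from \eqref{eq:retarded-Fourier}, while the d'Alembertian acts as $\widehat{(\square+a)f}(p) = (p_0^2 - |\vec p|^2 + a)\hat f(p)$. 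Setting $z \defeq -(p_0-i\epsilon)^2 + |\vec p|^2$, these facts combine to $(p_0^2 - |\vec p|^2 + a)\hat\Delta_R(p,M^2) = (z-a)/(M^2+z)$; substituting into \eqref{eq:kernel-K}, applying the partial-fraction identity $(z-a)/[(M^2+a)(M^2+z)] = 1/(M^2+a) - 1/(M^2+z)$ and using $\varrho(M^2) = (16\pi^2)^{-1}\sqrt{1-4m^2/M^2}$ then yields both \eqref{eq:expval-phi-linear-vacuum-transf} and the integral representation \eqref{eq:F} simultaneously. Integrability of the subtracted integrand is clear: as $M \to \infty$ it is $O((z-a)/M^4)$, and at $M^2 = 4m^2$ the factor $\sqrt{1-4m^2/M^2}$ vanishes.

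Properties (a)--(e) are then straightforward corollaries of \eqref{eq:F}. Pointwise holomorphicity of $1/(M^2+z)$ for $z \notin \{-M^2\}$, together with uniform $M^{-4}$ bounds on compact subsets of $\mathbb{C}\setminus(-\infty,-4m^2]$, gives (a) via Morera's theorem and dominated convergence; continuity at $z = -4m^2$ follows because $\sqrt{1-4m^2/M^2} \sim \sqrt{M^2-4m^2}/(2m)$ tames the potential endpoint singularity of $1/(M^2-4m^2)$. Property (c) is immediate since the integrand vanishes at $z=a$. For (e), $\operatorname{Im}(1/(M^2+z)) = -\operatorname{Im}(z)/|M^2+z|^2$ gives the stated integral representation of $\operatorname{Im}(F_a)$ and its sign, while the Sokhotski--Plemelj formula applied to $1/(M^2+s\pm i0)$ for $s < -4m^2$ produces the finite jump proportional to $\sqrt{1+4m^2/s}$, establishing (b) and the boundary behavior in (e). For (d) I would differentiate \eqref{eq:F} under the integral sign to obtain $F_a'(s) = \int_{4m^2}^\infty \sqrt{1-4m^2/M^2}\,(M^2+s)^{-2}\,dM^2 > 0$ for $s > -4m^2$, and deduce the logarithmic divergence as $s \to +\infty$ either from the explicit formula \eqref{eq:FF} or by elementary asymptotics.

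To prove the closed form \eqref{eq:FF} I would introduce a common cutoff $\Lambda$ and compute $I^{(0)}(y,\Lambda^2) \defeq \int_{4m^2}^{\Lambda^2}\sqrt{1-4m^2/M^2}/(M^2+y)\,dM^2$, so that $F_a(z) = \lim_{\Lambda\to\infty}[I^{(0)}(a,\Lambda^2) - I^{(0)}(z,\Lambda^2)]$. The hyperbolic substitution $M = 2m\cosh\theta$ converts $I^{(0)}$ into $8m^2\int_0^{\theta_\Lambda}\sinh^2\theta/(4m^2\cosh^2\theta+z)\,d\theta$; splitting $\sinh^2\theta = \cosh^2\theta - 1$ gives $I^{(0)}(z,\Lambda^2) = 2\theta_\Lambda - (2z+8m^2)\int_0^{\theta_\Lambda}d\theta/(4m^2\cosh^2\theta + z)$, and the substitution $t = \tanh\theta$ reduces the remaining integral to $\int dt/(4m^2 + z - zt^2)$, which for $z > 0$ evaluates to $[2\sqrt{z(z+4m^2)}]^{-1}\log[(\sqrt{z+4m^2}+\sqrt{z}\,t)/(\sqrt{z+4m^2}-\sqrt{z}\,t)]$. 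As $\Lambda\to\infty$, $\tanh\theta_\Lambda \to 1$ and the logarithm tends to $2\log[(\sqrt{z+4m^2}+\sqrt{z})/(2m)]$, while the divergent $2\theta_\Lambda$ pieces cancel between the two copies of $I^{(0)}$, giving \eqref{eq:FF} for $z,a > 0$; analytic continuation by property (a) then extends it to $z \in \mathbb{C}\setminus(-\infty,0)$. The main delicate point will be the careful handling of the retarded $i\epsilon$ prescription when commuting the d'Alembertian through the spectral integral and identifying the result with $F_a$ evaluated on the distributional boundary value $z = -(p_0 - i\epsilon)^2 + |\vec p|^2$, since the algebraic manipulations above must be justified as equalities of tempered distributions rather than merely pointwise identities away from the branch locus.
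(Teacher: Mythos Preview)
Your argument is correct. The derivation of \eqref{eq:expval-phi-linear-vacuum-transf} and of properties (a)--(e) is essentially the same as the paper's: compute $\hat\Delta_R$, use the convolution theorem, and read everything off the subtracted spectral integral \eqref{eq:F}. (The paper simply states that (a)--(e) ``can be inferred directly from its integral representation'', so your more explicit treatment is a strict expansion of that.)

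Where you genuinely diverge is in the proof of the closed form \eqref{eq:FF}. You compute the cut-off integral $I^{(0)}(z,\Lambda^2)$ directly via the hyperbolic substitution $M=2m\cosh\theta$, reduce it to an elementary $\int dt/(4m^2+z-zt^2)$, and cancel the divergent $2\theta_\Lambda$ between the two subtracted copies. The paper instead \emph{verifies} \eqref{eq:FF} a posteriori: it defines $A_a(z)=F_a(z)/(a-z)$ using the right-hand side of \eqref{eq:FF}, takes the inverse \emph{time} Fourier transform of $A_a(-(p_0-i\epsilon)^2+|\vec p|^2)$ by a contour argument (closing in the lower/upper half-plane, picking up only the two branch cuts), computes the cut discontinuity of $\log[(\sqrt{z+4m^2}+\sqrt{z})/(2m)]$, and recognises the result as the spectral integral defining $\mathcal{K}_a$ in position space. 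Your route is shorter and entirely elementary; the paper's route has the compensating virtue that it simultaneously checks the retarded $i\epsilon$ prescription and the distributional identity $(\square+a)\mathcal{A}=\mathcal{K}_a$, which is exactly the ``delicate point'' you flag at the end of your proposal. Either approach is fine for establishing the proposition.
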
 
The qualitative behaviour of $\text{Re}(F_a(z))$ and of $|\text{Im}(F_a(z))|$ for $z\in ( -4m^2, \infty)$ is plotted in Figure \ref{Fig:f}. 
	\begin{figure}[!htb]
		\begin{minipage}{0.5\textwidth}
			\centering
			\includegraphics[width=.99\linewidth]{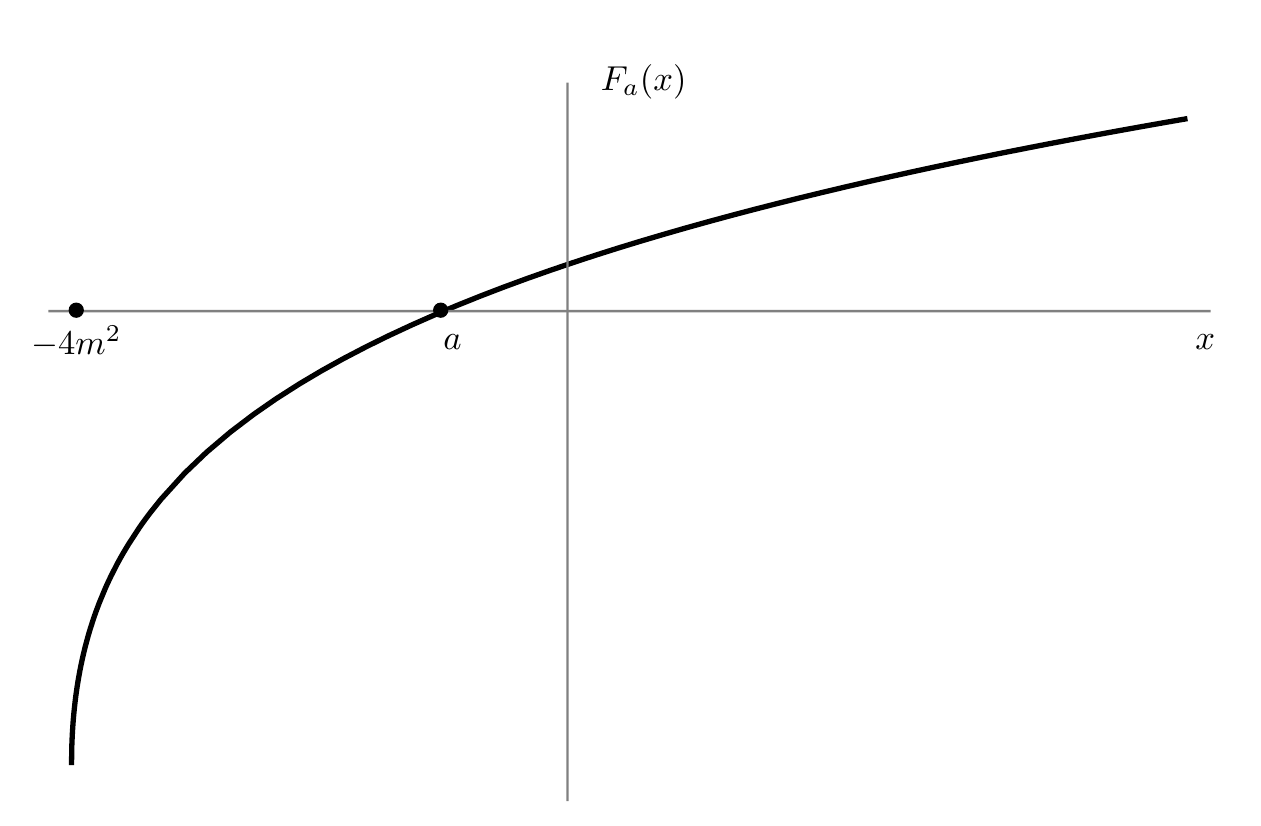}
		\end{minipage}\hfill
		\begin{minipage}{0.5\textwidth}
			\centering
			\includegraphics[width=.99\linewidth]{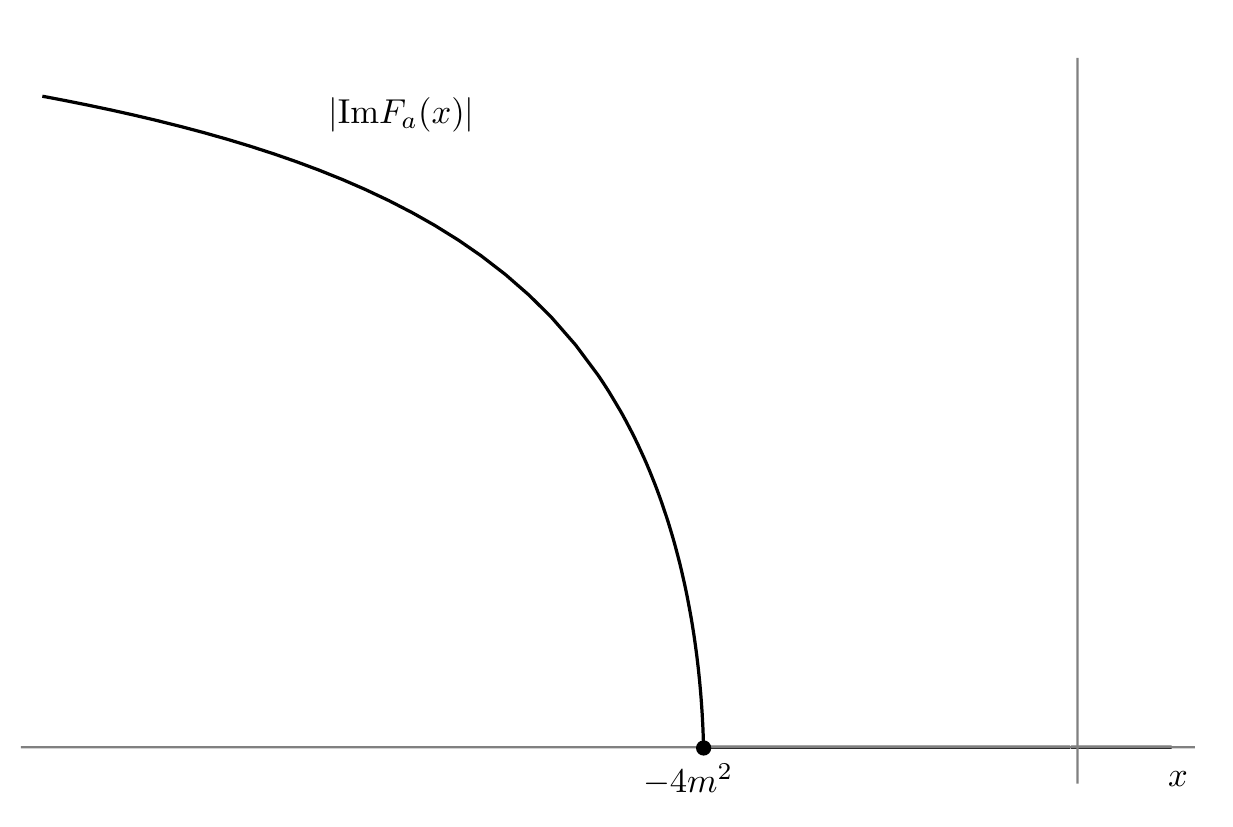}
		\end{minipage}
		\caption{\footnotesize The first graph contains the qualitative behaviours of $F_a(x)$ for $x \in (-4m^2,\infty)$ with $-4m^2<a<0$. The second graph is the qualitative behavior of $|\text{Im}F_a(x)|$.}\label{Fig:f}
	\end{figure}

\begin{proof}
	Using the definition of $\cK_a$ in \eq \eqref{eq:kernel-K}, the Fourier transform of the distribution $\Delta_R$ given in \eq \eqref{eq:retarded-Fourier} is
	\[
		\hat{\Delta}_R(p_0,\vec{p}) = \frac{1}{(p_0-i0^+)^2 - |\vec{p}|^2 -M^2}.
	\] 
	Hence taking the Fourier transform on both sides of \eq \eqref{eq:expval-phi-linear-vacuum} and
	applying the convolution theorem to $\cK_a(\psi_1)$ yields
	\begin{align*}
		&\cF \ag \expvalv{\phi^2}^{\text{(lin)}} \cg (p_0,\vec{p}) =  \\
		&\lim_{\epsilon\to 0^+}\lambda \hbar \int_{4m^2}^{\infty} \d M^2 \varrho_2(M^2) \frac{1}{M^2+a} \frac{1}{-(p_0-i\epsilon)^2 + |\vec{p}|^2 + M^2} \at -p_0^2 +|\vec{p}|^2  -a \ct \hat{\psi}(p_0,\vec{p}),
	\end{align*}
	where $\hat{\psi}_1(p_0,\vec{p}) \in \mathcal{S}(\cM)$. Then, the first part of the thesis follows after recalling the form of $\varrho_2(M^2) = \frac{1}{16\pi^2}\sqrt{1- \frac{4m^2}{M^2}}$ and the definition of $F_a(z)$.

The list of properties of $F_a(z)$ can be inferred directly from its integral representation. 
To check the validity of the representation \eqref{eq:FF} of $F_a$, consider 
\[
	A_a(z) \doteq \frac{1}{a-z}\left( 2\sqrt{\frac{z+4m^2}{z}} \log\left(\frac{\sqrt{z+4m^2}+\sqrt{z}}{2m}\right)
		-2\sqrt{\frac{a+4m^2}{a}}\log\left(\frac{\sqrt{a+4m^2}+\sqrt{a}}{2m}\right)  \right).
\]
From the expression of $F_a$ given in \eq \eqref{eq:FF}, $F_a(-p_0^2+|\vec{p}|^2)= (p_0^2-|\vec{p}|^2+a) A_a(-p_0^2+|\vec{p}|^2)$. We take the inverse Fourier transform in time of $A_a(-p_0^2+|\vec{p}|^2)$, that is, 
\[
	\tilde{\mathcal{A}}(t,\vec{p}) \doteq \lim_{\epsilon \to 0^+} \frac{1}{2\pi}\int_{-\infty}^\infty  A(-(p_0-i\epsilon)^2+\vec{p}^2) \e^{i p_0 t} \d p_0. 
\] 
The integrand in $\tilde{\mathcal{A}}$ has two cuts for $(p_0-i\epsilon)^2 > (|\vec{p}|^2+4m^2)$ located in the upper half complex plane and it is analytic outside the two cuts (it has no poles); furthermore, $|A(-(w-i\epsilon)^2+\vec{p}^2)|$ for $w\in\mathbb{C}$ vanishes in the limit $|w|\to\infty$. 
Hence, that inverse Fourier transform can be obtained by standard results of complex analysis, including Jordan's lemma and Cauchy residue theorem. In particular, to evaluate the integral over the real line, for $t<0$ we can close the contour in the lower half plane, and thus $\tilde{\mathcal{A}}=0$ because $A(-(w-i\epsilon)^2+\vec{p}^2)$ is analytic in the lower half plane.
On the other hand, if $t>0$, then the contour is closed in the upper half plane, and thus
only the two cuts matter in the evaluation of the integral over the real line which gives $\tilde{\mathcal{A}}$.
The contributions due to the two cuts for $t>0$ can be combined to give 
\[
	\tilde{\mathcal{A}}(t,\vec{p}) =  \lim_{\epsilon \to 0^+} \frac{1}{2\pi}\int_{\sqrt{|\vec{p}|^2+4m^2}}^\infty  
	\left(A(-w^2+\vec{p}^2+i\epsilon) -A(-w^2+\vec{p}^2-i\epsilon) \right) \left(\e^{i w t} - \e^{-i w t}\right) \d w. 
\]
Changing variable of integration to $M^2 = w^{2}-|\vec{p}|^2$ and computing the discontinuity of 
$\log\left(\frac{\sqrt{z+4m^2}+\sqrt{z}}{2m}\right)$ along its cut, 
we obtain for $t>0$ that
\[
	\tilde{\mathcal{A}}(t,\vec{p}) =  \frac{i2\pi}{2\pi}\int_{4m^2}^\infty \frac{1}{M^2+a} \left(\sqrt{1-\frac{4m^2}{M^2}}\right) \frac{\left(e^{i \omega_0 t} - e^{-i \omega_0 t}\right)}{2\omega_0}  \d M^2,
\]
where $\omega_0 = \sqrt{|\vec{p}|^2+M^2}$. So
\[
	{\mathcal{A}}(x) = \int_{4m^2}^\infty \frac{1}{M^2+a} \left(\sqrt{1-\frac{4m^2}{M^2}}\right) \Delta_R(x,M^2) \d M^2,
\]
and hence $(\square +a) \mathcal{A}$ is equal to $\mathcal{K}_a$ up to a constant factor given in \eq \eqref{eq:kernel-K}. Therefore,
 the expression of $F_a$ given in \eq \eqref{eq:F} follows, thus proving that it coincides with \eq \eqref{eq:FF}.
\end{proof}

\begin{remark}
\label{rem:massless}
In the limit of vanishing mass $m^2 = 0$ and for $-p_0^2+|\vec{p}|^2>0$, the function $F_a(-p_0^2+|\vec{p}|^2)$ given in \eq \eqref{eq:F} takes the form
	\begin{equation}
	\label{eq:F-massless}
		F_{a}(-p_0^2+|\vec{p}|^2)  = \log(\frac{-p_0^2+|\vec{p}|^2}{a}).
	\end{equation}
	This logarithmic behaviour is similar to the case studied in \cite{Horowitz1980wf} for the linearized semiclassical Einstein equations, in the weak-field limit of gravity and after considering massless quantum fields (see also \cite{Hartle1981ground}). The function $F_a$ is also similar to the Fourier transform of the Green function associated to the first order equation analyzed in \cite{Flanagan1996back} in the context of semiclassical back-reaction.
\end{remark}

\subsection{Linearized solutions}

The semiclassical equation \eqref{eq:SCE-psi1} governing the dynamics of the perturbations at the linear order is a linear equation in the perturbation field $\psi_1$. As we shall see below the properties of the solution space of that linear equation depends strictly on the parameters $a,\lambda,\lambda_i, g_i$, $i=1,2$. The non local state-dependent contribution $\expvalv{\phi^2}^{(\text{lin})}(x)$ defined in \eq \eqref{eq:expval-phi-linear-vacuum} was constructed in terms of the linear operator $\mathcal{K}_a$ introduced in  \eq \eqref{eq:expval-phi-linear-conv}, and thus it was expressed in terms of the function $F_a$ studied in Proposition \ref{prop:state}. Therefore, \eq \eqref{eq:SCE-psi1} evaluated in the Minkowski vacuum state reads
\begin{equation}
\label{eq:SCE-psi-linear}
	(g_2\square - g_1) \psi_1(x) = (\lambda_1 - \lambda_2 \square)\expvalv{\phi^2}^{(\text{\normalfont lin})}(x).
\end{equation}	
To highlight its mathematical structure, \eq \eqref{eq:SCE-psi-linear} can be rewritten in the following form: 
\begin{equation}
\label{eq:SCE-eq1-hom}
	\hbar \lambda P_\lambda \mathcal{K}_a(\psi_1)(x) + P_g \psi_1(x) = 0,
\end{equation}
where $P_\lambda \doteq \lambda_2 \square - \lambda_1$, and $P_g \doteq g_2 \square - g_1$.

\medskip

Because of the presence of a second d'Alembert operator in the expression of $\expvalv{\phi^2}^{(\text{lin})}(x)$ given in \eq \eqref{eq:expval-phi-linear-conv} through \eq \eqref{eq:kernel-K}, \eq \eqref{eq:SCE-psi-linear} contains fourth-order derivatives in $\psi_1$. 
It has thus a form similar to the semiclassical equations which usually appear in semiclassical theories of gravity, see, e.g., \cite{Kay1981see} and the next Section. Thus, it manifests the same conceptual issues already known in semiclassical gravity as higher-order theory of gravity. 
In particular, there are cases where similar equations admit the so-called runaway solutions, which make the classical background field unstable in the semiclassical approach \cite{Horowitz1978inst,Horowitz1980wf,Jordan1987stab,Parker1993red,Simon1991stab,Flanagan1996back}.

Runaway solutions usually consist of solutions of the linearized system around some background which grow exponentially in time. This class of linearized solutions become dominant over the background at large times. Thus, the full solution of the system acquires, in principle, a very different form from the chosen background, and at the same time it is expected to be very sensitive to the chosen initial conditions. Therefore, the background solution cannot be assumed to be stable. On the contrary, if all the linearized solutions decay sufficiently fast to zero for large times, then the perturbations become negligible with respect to the background solution, thus indicating the stability of the background.

In the next part we analyze \eq \eqref{eq:SCE-eq1-hom} equipped with a compactly supported smooth source term, namely
\begin{equation}
\label{eq:SCE-eq1-source}
	\hbar \lambda P_\lambda \mathcal{K}_a(\psi_1)(x) + P_g \psi_1(x) = f(x),
\end{equation}
where $\mathcal{K}_a$ is the linear operator introduced in \eq \eqref{eq:expval-phi-linear-conv}, and $f \in C_0^\infty(\cM)$ is a compactly supported source. 

The strategy is as follows. In a first step, we shall show that this equation manifests an hyperbolic nature, and we shall construct its retarded fundamental solutions as an operator $D_R: C^\infty_0(\mathcal{M}) \to C^\infty(\mathcal{M})$. 
Afterwards, thanks to the regularity properties of $D_R$, we shall prove that past compact solutions of the form $\psi_1 = D_R(f)$ decay as $1/t^{3/2}$ for large times $t$, 
hence getting the stability of the corresponding backgrounds against perturbation sourced by $f$.

In a second step, we shall study the smooth spatially compact solutions of the homogeneous equation \eqref{eq:SCE-eq1-hom} corresponding to \eq \eqref{eq:SCE-psi1}.
To determine uniquely a solution in the future and in the past of $t_0$, we shall equip \eq \eqref{eq:SCE-eq1-hom} with suitable initial conditions at $t_0 =0$, i.e., smooth compactly supported initial data of the form 
\[
	\psi_1^{(j)}\at 0,\vec{x} \ct = \varphi^j(\vec{x})
\]
for $j\in \{0,1\}$ or $j\in \{0,1,2,3\}$, with $\varphi^j\in C^\infty_0(\mathbb{R}^3)$.
We shall see that the number of initial conditions necessary to determine a spatially compact 
solution depends
on the choice of the parameters: in some cases, four initial conditions have to be imposed, while, in other cases, only two initial conditions are sufficient to obtain a solution; finally, there are also cases where no solution exists. Thus, we shall prove that there are wide ranges of values of $(a,g_1,g_2,\lambda,\lambda_1,\lambda_2)$ for which solutions of \eq \eqref{eq:SCE-eq1-hom} with compactly supported initial data decay faster than $1/t^{3/2}$ for large times. Therefore, the stability of the linearized back-reacted system is restored even in this case.

\medskip

Our analysis starts from showing that \eq \eqref{eq:SCE-psi-linear} written as \eq \eqref{eq:SCE-eq1-hom} manifests an hyperbolic nature.
\begin{proposition}
\label{theo:hyperbolic}
	Consider the equation \eqref{eq:SCE-eq1-source} sourced by $f\in {C}^\infty_0(\cM)$ in the form
	\[
	\hbar \lambda P_\lambda \mathcal{K}_a(\psi_1)(x) + P_g \psi_1(x) = f(x).
	\]	
	Set $\lambda_2\neq 0$ and $g_2\neq0$.
Let $\psi_1$ be a past compact solution of \eq \eqref{eq:SCE-eq1-source}, then
\begin{equation}
\label{eq:retardation}
	\psi_1 = 
 \frac{\lambda_2}{g_2}
 \left(
 \Delta_{R,\lambda} f - \hbar \lambda \mathcal{K}_a(\psi_1) - \Delta_{R,\lambda} \left(P_g-\frac{g_2}{\lambda_2}P_\lambda\right)\psi_1 \right)
\end{equation}
where $\Delta_{R,\lambda}$ is the retarded fundamental solution of $P_\lambda \Delta_{R,\lambda} = \mathbb{I}$.
Moreover, if we consider $\delta \psi_1, \delta f\in C^\infty_0(M)$ then, 
$\psi_1(x)+\delta \psi_1(x)$ is also a solution of 
\[
	\hbar \lambda P_\lambda \mathcal{K}_a(\psi_1+\delta \psi_1)(x) + P_g (\psi_1+\delta \psi_1)(x) = f(x)+\delta f(x).
	\]	
for every $x \not \in 
(J^{+}(\text{\normalfont supp} \delta f)
\cup J^{+}(\text{\normalfont supp} \delta \psi_1) )$.

\end{proposition}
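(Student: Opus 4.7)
The plan is to recast \eqref{eq:SCE-eq1-source} so that the highest-derivative terms regroup into a single application of $P_\lambda$, invert $P_\lambda$ by its retarded Green's function $\Delta_{R,\lambda}$, and then handle the second assertion by a linearity-plus-causality argument.

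First I would note that, as differential operators,
\[
P_g - \frac{g_2}{\lambda_2}P_\lambda = \at g_2\square-g_1\ct - \frac{g_2}{\lambda_2}\at\lambda_2\square-\lambda_1\ct = \frac{g_2\lambda_1}{\lambda_2}-g_1,
\]
so this difference is multiplication by a real constant. Using this identity, equation \eqref{eq:SCE-eq1-source} can be rewritten as
\[
P_\lambda\at \hbar\lambda\,\mathcal{K}_a(\psi_1) + \frac{g_2}{\lambda_2}\psi_1 \ct = f - \at P_g-\frac{g_2}{\lambda_2}P_\lambda\ct\psi_1.
\]
Next I would invert $P_\lambda$. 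Since $\lambda_2\neq 0$, $P_\lambda/\lambda_2 = \square - \lambda_1/\lambda_2$ is a normally hyperbolic operator on Minkowski spacetime, admitting a unique retarded fundamental solution $\Delta_{R,\lambda}$ with support in the causal future. To apply $\Delta_{R,\lambda}$ to both sides I would need the right-hand side and the quantity in parentheses to be past compact: $f$ is compactly supported, $\psi_1$ is past compact by hypothesis, and $\mathcal{K}_a(\psi_1)$ inherits past compactness from $\psi_1$ because its kernel \eqref{eq:kernel-K} is built from the retarded Klein--Gordon propagators $\Delta_R(\cdot,M^2)$ and is therefore supported in the causal future of the origin. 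Applying $\Delta_{R,\lambda}$ and multiplying by $\lambda_2/g_2$ then yields exactly \eqref{eq:retardation}.

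For the second assertion I would exploit linearity and causality. Substituting $\psi_1+\delta\psi_1$ and $f+\delta f$ into \eqref{eq:SCE-eq1-source} and subtracting the original equation satisfied by $(\psi_1,f)$, the residual to be shown to vanish at $x$ is
\[
\hbar\lambda\,P_\lambda\mathcal{K}_a(\delta\psi_1)(x) + P_g\,\delta\psi_1(x) - \delta f(x).
\]
The kernel $\mathcal{K}_a(x-y)$ is retarded by \eqref{eq:kernel-K}, so $\mathcal{K}_a(\delta\psi_1)$ is supported in $J^{+}(\text{supp}\,\delta\psi_1)$, a closed set since $\text{supp}\,\delta\psi_1$ is compact. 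For $x$ in the open complement, both $\delta\psi_1$ and $\mathcal{K}_a(\delta\psi_1)$ vanish on an open neighborhood of $x$, and hence so do all their derivatives; thus locality of the differential operators $P_\lambda$ and $P_g$ makes the first two terms vanish. Similarly $\delta f$ vanishes outside $\text{supp}\,\delta f\subseteq J^{+}(\text{supp}\,\delta f)$. Combining these facts gives the claim on the complement of $J^{+}(\text{supp}\,\delta f)\cup J^{+}(\text{supp}\,\delta\psi_1)$.

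The step I expect to be the main obstacle is justifying that $\mathcal{K}_a$ preserves past compactness and that the composition $\Delta_{R,\lambda}\mathcal{K}_a(\psi_1)$ is unambiguously defined. This requires the uniform integrability in $M^2$ of the spectral weight $\varrho(M^2)/(M^2+a)$ at and near threshold (guaranteed by $a>-4m^2$), which is essentially the content of Proposition \ref{prop:state}, together with the standard fact that convolutions of retarded propagators are well defined on past compact distributions in globally hyperbolic Minkowski spacetime; once both are in place, the algebraic manipulation outlined above goes through verbatim.
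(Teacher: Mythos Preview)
Your proposal is correct and follows essentially the same approach as the paper: apply the retarded fundamental solution $\Delta_{R,\lambda}$ of $P_\lambda$ to the equation, use $\Delta_{R,\lambda}P_\lambda=\mathbb{I}$ on past compact functions to isolate $\psi_1$, and then invoke the retarded support of $\mathcal{K}_a$ for the causality statement. The only cosmetic difference is that you first regroup so that $P_\lambda$ acts on a single bracket before inverting, whereas the paper applies $\Delta_{R,\lambda}$ directly and then inserts $\frac{g_2}{\lambda_2}\Delta_{R,\lambda}P_\lambda\psi_1=\frac{g_2}{\lambda_2}\psi_1$; your explicit observation that $P_g-\frac{g_2}{\lambda_2}P_\lambda$ is a multiplication by the constant $\frac{g_2\lambda_1}{\lambda_2}-g_1$ is a helpful clarification the paper leaves implicit.
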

\begin{proof}
As the solution $\psi_1$ is past compact by hypothesis, we can apply the retarded operator associated to  $P_\lambda$ on both sides of \eq \eqref{eq:SCE-eq1-source}, thus obtaining
\[
	\hbar \lambda \mathcal{K}_a(\psi_1) + \Delta_{R,\lambda} P_g\psi_1 = \Delta_{R,\lambda} f.
\]
Using the definition of fundamental solution $(\Delta_{R,\lambda} \circ P_\lambda) \psi_1 = \psi_1$, this equation can be written also as
\[
\hbar \lambda \mathcal{K}_a(\psi_1) + \Delta_{R,\lambda} \left(P_g-\frac{g_2}{\lambda_2}P_\lambda\right)\psi_1 + \frac{g_2}{\lambda_2}\psi_1 = \Delta_{R,\lambda} f, 
\]
and hence as
\[
	\frac{g_2}{\lambda_2}\psi_1 = \Delta_{R,\lambda} f - {\hbar} \lambda \mathcal{K}_a(\psi_1) - \Delta_{R,\lambda} \left(P_g-\frac{g_2}{\lambda_2} P_\lambda\right)\psi_1,
\]
thus yielding \eq \eqref{eq:retardation}. Notice that both $\Delta_{R,\lambda}$ and $\Delta_{R,\lambda} (\lambda_2 P_g-g_2 P_\lambda)$ have the retarded property, i.e., $\text{supp}(\Delta_{R,\lambda} f) \subset J^+ (\text{supp} f)$ and $\text{supp}(\Delta_{R,\lambda} (\lambda_2 P_g- g_2 P_\lambda) \subset  J^{+} (\text{supp} f)$. Similarly, $\mathcal{K}_a$ satisfies also the retarded propertied because it is an integral of retarded operators. 
The last observation descends from the fact that to compute $\mathcal{K}_a(\psi_1+\delta \psi_1)(x)$ only $\psi_1+\delta \psi_1$ in the past of $x$ matters.  Hence, if $x\not\in J^{+}(\text{supp}\delta f) \cup J^{+}(\text{supp}\delta \psi_1)$, $\psi_1(x)+\delta \psi_1(x) = \psi_1 (x)$, 
$f(x)+\delta f(x) = f (x)$,
and $\mathcal{K}_a(\psi_1+\delta \psi_1)(x)= \mathcal{K}_a(\psi_1)(x)$.
We conclude that, for $x\not\in J^{+}(\text{supp}\delta f) \cup J^{+}(\text{supp}\delta \psi_1)$, 
$\psi_1+\delta\psi_1$
is again a solution of the inhomogeneous equation with the source modified by $\delta f(x)$.

\end{proof}

The constraints on the parameters $\lambda_2$ and $g_2$ given by hypothesis in Proposition \ref{theo:hyperbolic} can be easily removed adapting the first part of the proof. For example, if $g_2=0$, there is no need to add $P_\lambda \Delta_{R,\lambda}$ in the right hand side of \eq \eqref{eq:retardation}. On the other hand, if $\lambda_2=0$, then the proof starts with getting an analog of \eq \eqref{eq:retardation} after applying the retarded operator $\Delta_{R,g}$ at the place of $\Delta_{R,\lambda}$ on both sides of \eq \eqref{eq:SCE-eq1-source}. Finally, if both $\lambda_2$ and $g_2$ vanish, then there is no need of preliminary applying any retarded operator to \eq \eqref{eq:SCE-eq1-source}.

Proposition   \ref{theo:hyperbolic}, 
and more precisely \eq \eqref{eq:retardation}, 
suggests that the form of a past compact solution $\psi_1$ of \eq  \eqref{eq:SCE-eq1-source} in $x$ cannot be influenced by any modification of $\psi_1$ or $f$ outside of $J^{-}(x)$. We shall see a posteriori that this indication is actually correct, because we shall prove in Theorem \ref{theo:retarded} that a retarded fundamental solution of \eq \eqref{eq:SCE-eq1-source} exists.   

\medskip

We proceed in the analysis of the form of the solution of the linearized semiclassical equation by studying the associated retarded fundamental solution. We use Fourier techniques to analyze the fundamental solution of \eq \eqref{eq:SCE-eq1-source}.
Hence,
\begin{equation}
\label{eq:linearized-source-fourier} 
	-\left((\lambda_1 + \lambda_2 (-(p_0)^2+|\vec{p}|^2))\frac{\lambda \hbar}{16\pi^2}  F_a(-(p_0-i0^+)^2+|\vec{p}|^2) +
	(g_2(-p_0^2+|\vec{p}|^2) + g_1)\right) \hat\psi_1(p_0,\vec{p})=\hat{f}(p_0,\vec{p}),
\end{equation} 
where $f\in C^{\infty}_0(\mathcal{M})$. \Eq \eqref{eq:SCE-eq1-source} can equivalently be written in a more compact form as
\[
	S(-(p_0-i0^+)^2+|\vec{p}|^2) \hat\psi_1(p_0,\vec{p})=\hat{f}(p_0,\vec{p}),
\]
where
\begin{equation}
\label{eq:S}
	S(z) \doteq - (\lambda_1+\lambda_2 z ) \frac{\lambda \hbar}{16\pi^2} F_a(z) - (g_1+g_2 z).
\end{equation}
In order to obtain the retarded fundamental solutions associated to the kernel $S(z)$, we need to study the set of points in the complex plane in which $S(z)$ vanishes: we denote this set by $\mathcal{S}$. Then, we shall prove that, if the parameters $(\lambda,g_i,\lambda_i)$ satisfy certain conditions, this set contains only real elements, and, furthermore, it includes only negative elements in some special cases. Among them, we shall impose as a constraint on the parameters the following inequality:
\begin{equation}
\label{eq:condition-param}
	g_2 \lambda_1-\lambda_2 g_1 \geq 0.
\end{equation}	
The characterization of elements in $\mathcal{S}$ is studied in the following proposition. 
\begin{proposition}
\label{prop:zeros}
	Let $\mathcal{S}\subset\mathbb{C}$ be set of zeros of $S(z)$ given in \eq \eqref{eq:S}. 
	Fix the parameters in such a way that at least one of the two $\lambda_i$, $i\in\{1,2\}$ is non vanishing and $g_2 \lambda_1-\lambda_2 g_1 \geq 0$, $\lambda>0$, $-4m^2<a$. Then we distinguish two cases:
	\begin{itemize}
		\item[a)] if $\lambda_2\neq0$,  and $-\lambda_1/\lambda_2 \leq -4m^2$, then $\mathcal{S}\subset(-4m^2,\infty)\cup\{-\lambda_1/\lambda_2\}\subset \mathbb{R}$;
		\item[b)] $\mathcal{S}\subset (-4m^2,\infty) \subset \mathbb{R}$ otherwise. 
	\end{itemize}
    In particular, if $\lambda_2\neq0$, $-\lambda_1/\lambda_2$ is in $ \mathcal{S}$ only if $g_2 \lambda_1/\lambda_2 = - g_1$. Furthermore, $\mathcal{S}$ contains one, two or no elements depending on the parameters $\lambda_i$, $g_i$, and $a$.
\end{proposition}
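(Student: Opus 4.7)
The plan is to reduce the characterization of $\mathcal{S}$ to the analytic properties of $F_a$ collected in Proposition \ref{prop:state}, especially the strict monotonicity on $(-4m^2,\infty)$ from item (d) and the definite sign of $\mathrm{Im}\, F_a(z)$ on the upper and lower half-planes from item (e).

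First, I would rule out complex zeros. For $z$ with $v:=\mathrm{Im}\, z\neq 0$, multiply $S(z)$ by $\overline{\lambda_1+\lambda_2 z}$ and extract the imaginary part. A direct computation gives $\mathrm{Im}\bigl[(g_1+g_2 z)\overline{(\lambda_1+\lambda_2 z)}\bigr] = v\,(g_2\lambda_1 - g_1\lambda_2)$, so
\[
\mathrm{Im}\bigl[S(z)\,\overline{(\lambda_1+\lambda_2 z)}\bigr] \;=\; -\frac{\lambda\hbar}{16\pi^2}\,|\lambda_1+\lambda_2 z|^2\,\mathrm{Im}\, F_a(z)\;-\;v\,(g_2\lambda_1 - g_1\lambda_2).
\]
Under the hypotheses $\lambda>0$, $g_2\lambda_1-g_1\lambda_2\geq 0$, and item (e), both summands on the right-hand side carry the opposite sign to $v$, and the first is strictly so unless $\lambda_1+\lambda_2 z = 0$. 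Hence a complex zero of $S$ would force $\lambda_1+\lambda_2 z = 0$; taking imaginary parts then yields $\lambda_2 v = 0$, whence $\lambda_2=0$ and consequently $\lambda_1=0$, contradicting the assumption that at least one of the $\lambda_i$ is nonzero. Therefore $\mathcal{S}\subset\mathbb{R}$.

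Next, I would localise the real zeros. Since $F_a$ is a single-valued analytic function on $\mathbb{C}\setminus(-\infty,-4m^2]$, a real zero of $S$ at which the prefactor $\lambda_1+\lambda_2 z$ does not vanish must already lie in $(-4m^2,\infty)$. The only remaining possibility, available when $\lambda_2\neq 0$, is that $z=-\lambda_1/\lambda_2$ itself lies on or left of $-4m^2$: there the prefactor vanishes, and because $F_a$ has finite one-sided limits across the branch cut, the product $(\lambda_1+\lambda_2 z)\,F_a(z)$ tends to $0$ regardless of the direction of approach. Consequently $S$ extends continuously to $-\lambda_1/\lambda_2$ with value $g_2\lambda_1/\lambda_2 - g_1$, and this candidate belongs to $\mathcal{S}$ precisely under the explicit condition on the parameters in the proposition. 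This is exactly the dichotomy (a) versus (b).

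Finally, for the cardinality claim, on $(-4m^2,\infty)\setminus\{-\lambda_1/\lambda_2\}$ I would rewrite $S(z)=0$ as
\[
-\frac{\lambda\hbar}{16\pi^2}\,F_a(z) \;=\; \frac{g_1+g_2 z}{\lambda_1+\lambda_2 z},
\]
and compare monotonicities. By item (d) the left-hand side is strictly decreasing in $z$ throughout the interval, while the right-hand side has derivative $(g_2\lambda_1-g_1\lambda_2)/(\lambda_1+\lambda_2 z)^2\geq 0$ and hence is nondecreasing on each connected component. Therefore the equation admits at most one solution per component, giving at most two solutions overall; the degenerate case $g_2\lambda_1=g_1\lambda_2$, in which the right-hand side is constant and $-\lambda_1/\lambda_2$ is automatically a zero of $S$, is treated separately and still produces no more than two zeros. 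The main bookkeeping obstacle I anticipate is ensuring that the continuous extension at $-\lambda_1/\lambda_2$ is not double-counted when it coincides with a transcendental solution, but this is a routine check once the monotonicity structure above is in place.
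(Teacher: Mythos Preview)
Your proof is correct and follows essentially the same strategy as the paper: multiply by $\overline{\lambda_1+\lambda_2 z}$ and take the imaginary part to exclude non-real zeros, then use monotonicity on $(-4m^2,\infty)$ to bound the cardinality. Your counting argument via the M\"obius form $\tfrac{g_1+g_2 z}{\lambda_1+\lambda_2 z}$ is a bit more unified than the paper's, which treats $\lambda_2=0$ separately using concavity of $F_a$ and uses the rearrangement $\tfrac{\lambda\hbar}{16\pi^2}F_a(s)+g_2/\lambda_2=\tfrac{\lambda_2 g_1-\lambda_1 g_2}{\lambda_2^2(s+\lambda_1/\lambda_2)}$ for $\lambda_2\neq 0$; but the underlying mechanism (strictly monotone versus nondecreasing on each component) is the same.
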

\begin{proof}
	Let us start with the equation $S(z)=0$ written in the form
	\begin{equation}
	\label{eq:CasoD}
		(\lambda_1+\lambda_2 z ) \frac{\lambda \hbar}{16\pi^2} F_a(z) = - (g_1+g_2 z).
	\end{equation}
To prove that the solution set is as stated in item a) and b), we proceed as follows. After multiplying both sides of the equation by $(\lambda_1+\lambda_2 \overline{z})$, taking the imaginary part yields the following equation:
\[
	|\lambda_1+\lambda_2 z |^2  \text{Im} \at \frac{\lambda \hbar}{16\pi^2} F_a(z) \ct = -  (g_2 \lambda_1-\lambda_2 g_1) \text{Im}(z),
\]
where the global sign of right-hand side depends only on $\text{Im}(z)$, because by hypothesis $g_2 \lambda_1-\lambda_2 g_1 \geq 0$. In particular, for strictly positive $\text{Im}(z)$ the right hand side is negative, while the left hand side is strictly positive thanks to item $e)$ of Proposition \ref{prop:state}; similarly, for strictly negative $\text{Im}(z)$ the right hand side is positive, while the left hand side is strictly negative.

Moreover, as $\text{Im}(F_a(x))$ is not $0$ also for $x < -4m^2$ (see Figure \ref{Fig:f}), we have that the only possible solutions of \eq \eqref{eq:CasoD} needs to be searched within  $(-4m^2,\infty) \cup \{-\lambda_1/\lambda_2\}$ when $\lambda_2\neq 0$, or in $(-4m^2,\infty)$ otherwise. Furthermore, 
 if both $\lambda_2\neq 0$ and $z=-\lambda_1/\lambda_2$, 
then the left hand side of \eq \eqref{eq:CasoD} vanishes, whereas the right hand side vanishes only when $g_2 \lambda_1/\lambda_2  =-g_1$. 

We are thus left with the analysis the following real equation:
\begin{equation}
\label{eq:CasoReale}
	(\lambda_1+\lambda_2 s ) \frac{\lambda \hbar}{16\pi^2} F_a(s) = - (g_1+g_2 s), \qquad  s\in (-4m^2,\infty) \cup \{-\lambda_1/\lambda_2\}. 
\end{equation}
The properties of the function $F_a(s)$ for $s\in  (-4m^2,\infty) \subset \mathbb{{R}}$ are listed in Proposition \ref{prop:state}, and the plot of this function is reported in Figure \ref{Fig:f}. 
Notice in particular that $F_a(s)$ is a concave function, because  the second derivative of the integrand in \eq \eqref{eq:F} is $-2\varrho(M^2) (M^2+s)^{-3}$, which is a strictly negative integrable function, and thus $F_a(s)''$ is strictly negative. Hence, for $\lambda_2=0$, $\lambda_1\neq 0$ by hypothesis, $(\lambda_1+\lambda_2 s)F_a(s)$ has a definite concavity. In this case, the maximum number of distinct solutions of \eq
\eqref{eq:CasoReale} is two. If $\lambda_2\neq 0$, we rewrite \eq \eqref{eq:CasoReale} as
\[
    \frac{\lambda \hbar}{16\pi^2} F_a(s) + \frac{g_2}{\lambda_2} = 
    \frac{1}{\lambda_2^2} \frac{\lambda_2 g_1-\lambda_1 g_2} {s + \frac{\lambda_1}{\lambda_2}} = I(s).
\]
Notice that $\frac{\lambda \hbar}{16\pi^2} F_a(s) + \frac{g_2}{\lambda_2}$ is monotonically increasing. Using the hypothesis that $\lambda_2g_1-\lambda_1g_2 \geq 0$, we observe that $I(s)$ is constant if 
$\lambda_2g_1-\lambda_1g_2 = 0$  or monotonically decreasing if $\lambda_2g_1-\lambda_1g_2 >0$; in this latter case, it has also a discontinuity (a vertical asymptote) in $s=-\lambda_1/\lambda_2$. Hence, also in this case there are at most two solutions, thus concluding the proof. 
\end{proof}

We observe that, under the hypothesis of Proposition \ref{prop:zeros}, $\mathcal{S}$, the space of zeros of \eq \eqref{eq:S}, coincides with the set of elements where \eq \eqref{eq:CasoReale} vanishes. Having established that there are at most two distinct solutions, owning the properties of the function $F_a(s)$ for $s\in  (-4m^2,\infty) \subset \mathbb{{R}}$ stated in Proposition \ref{prop:state}, and the plot of the qualitative behaviour of that  function reported in Figure \ref{Fig:f}, we may draw the following conclusion. There are cases where either one or two positive solutions of this equation exist, and there are cases where only one or two negative solutions exist. It is also possible to find cases where one positive and one negative solution exists. Finally, there are cases where no solutions exists at all.

Taking into account all the previous statements, we are now ready to write the explicit form of the retarded fundamental solution of \eq \eqref{eq:SCE-eq1-source}, and to show that past compact solutions decay to zero for sufficiently large times, as expected for a perturbation over a stable background.
\begin{theorem}
\label{theo:retarded}
	Consider the semiclassical equation with a source term $f\in C^\infty_0(\mathcal{M})$ given in \eq   \eqref{eq:SCE-eq1-source}
	in the form
	\[
		\hbar \lambda P_\lambda \mathcal{K}_a(\psi_1)(x) + P_g \psi_1(x) = f(x).
	\]
	Fix as non-vanishing constants at least one of the two $g_i$, and at least one of the $\lambda_i$, assume that the inequality $g_2 \lambda_1-\lambda_2 g_1 \geq 0$ holds, and set $-4m^2<a<0$.
	Suppose that the set $\mathcal{S}$ defined in Proposition \ref{prop:zeros} contains only real negative elements, then the Fourier transform of the retarded fundamental solution $D_R$ of \eq \eqref{eq:SCE-eq1-source} reads
	\[
		\hat{D}_R(p_0,\mathbf{p})= \frac{1}{S(-(p_0-i0^+)^2+|\mathbf{p}|^2)},
	\]
	where $S(z)$ was defined in \eq \eqref{eq:S}. Hence 
\begin{equation}
\label{eq:retarded-position}
	D_R(x) = -\sum_{s\in \mathcal{S}} \frac{1}{S'(s)} \Delta_R(x,-s) - \frac{\lambda \hbar}{16\pi^2} \int_{4m^2}^\infty  \sqrt{1-\frac{4m^2}{M^2}} \frac{ (\lambda_2 M^2-\lambda_1 )}{|S(-M^2)|^2} \Delta_R(x,M^2) \d M^2,
\end{equation}
where the elements of $\mathcal{S}$ are the zeros of $S(s)$, with $s\in (-4m^2,\infty) \cup \{-\lambda_1/\lambda_2\}$. The retarded fundamental solution $D_R$ is a linear operator which maps smooth compactly supported functions to smooth functions, and with this operator at disposal the solution $\psi_1$ of \eq \eqref{eq:SCE-eq1-source} with past compact support is
\begin{equation}
\label{eq:solution-ps1}
	\psi_1 = D_R(f).
\end{equation}
For $\lambda_2\neq 0$, $\psi_1(t,\vec{x})$  decays as $1/t^{3/2}$ for large $t$.
\end{theorem}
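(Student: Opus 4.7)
The plan is to work in Fourier space to read off $\hat{D}_R$ directly, convert $1/S(z)$ into a K\"allen--Lehmann type dispersion representation that exploits its pole/branch-cut structure, Fourier-invert term by term to reconstruct \eqref{eq:retarded-position}, and finally derive the $t^{-3/2}$ decay by combining the standard dispersive estimate for the massive Klein--Gordon retarded propagator with an integrability argument over $M^2$.

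Taking the Fourier transform of \eqref{eq:SCE-eq1-source} and invoking Proposition \ref{prop:state} to turn $\hbar\lambda\cK_a(\psi_1)$ into multiplication by $\tfrac{\lambda\hbar}{16\pi^2}F_a(-(p_0-i0^+)^2+|\vec{p}|^2)\hat\psi_1$ gives $S(-(p_0-i0^+)^2+|\vec{p}|^2)\hat\psi_1=\hat f$, whence $\hat D_R=1/S(-(p_0-i0^+)^2+|\vec{p}|^2)$; the $-i0^+$ prescription inherited from $\cK_a$ is exactly what selects the retarded fundamental solution. To obtain the position-space form I would note that under the hypotheses of the theorem the map $z\mapsto 1/S(z)$ is meromorphic on $\mathbb{C}\setminus(-\infty,-4m^2]$ with a finite set of simple poles at $\cS\subset(-4m^2,0)$ (Proposition \ref{prop:zeros}) and a branch cut on $(-\infty,-4m^2]$ inherited from $F_a$; the asymptotics $F_a(z)\sim 2\log z$ from \eqref{eq:FF} together with $\lambda_2\neq 0$ ensure $1/S(z)\to 0$ at infinity, so that no subtraction is required. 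A standard contour deformation then yields
\[
\frac{1}{S(z)}=\sum_{s\in\cS}\frac{1/S'(s)}{z-s}+\frac{1}{\pi}\int_{-\infty}^{-4m^2}\frac{\mathrm{Im}\,(1/S(\sigma+i0))}{\sigma-z}\,\d\sigma.
\]
Using item (e) of Proposition \ref{prop:state}, the explicit form \eqref{eq:S} of $S$, and $\mathrm{Im}(1/S)=-\mathrm{Im}(S)/|S|^2$, I find $\mathrm{Im}(1/S(-M^2+i0))=\tfrac{\lambda\hbar}{16\pi}(\lambda_1-\lambda_2 M^2)\sqrt{1-4m^2/M^2}\,|S(-M^2)|^{-2}$ for $M^2>4m^2$. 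Substituting $\sigma=-M^2$ and Fourier-inverting each denominator via the identity $1/(-(p_0-i0^+)^2+|\vec{p}|^2+\mu^2)=-\hat\Delta_R(p,\mu^2)$ reproduces \eqref{eq:retarded-position}.

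Past compactness of $D_R(f)$ is immediate from $\mathrm{supp}(\Delta_R(\cdot,\mu^2))\subset J^+(0)$ for every $\mu^2>0$. Smoothness of $D_R(f)$ for $f\in C^\infty_0(\cM)$ follows because the spectral weight $(\lambda_2 M^2-\lambda_1)\sqrt{1-4m^2/M^2}\,|S(-M^2)|^{-2}$ decays like $M^{-2}(\log M^2)^{-2}$ as $M^2\to\infty$ — a consequence of $|S(-M^2)|\sim|\lambda_2|\,M^2\tfrac{\lambda\hbar}{16\pi^2}|\log M^2|$ when $\lambda_2\neq 0$ — so that the $M^2$-integral defines a tempered distribution whose convolution against $f$ is $C^\infty$. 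The identity $\psi_1=D_R(f)$ solves \eqref{eq:SCE-eq1-source} by direct verification in Fourier space.

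The $t^{-3/2}$ decay is the main technical obstacle. For each fixed $\mu>0$, the convolution $\Delta_R(\cdot,\mu^2)*f$ decays as $C(\mu,f)\,t^{-3/2}$ by the standard dispersive estimate, obtained through stationary phase at $\vec{p}=0$ in the representation involving $e^{\pm i\sqrt{|\vec{p}|^2+\mu^2}\,t}$. The pole sum in \eqref{eq:retarded-position} is finite (at most two terms by Proposition \ref{prop:zeros}), and so it inherits this rate immediately. The genuine difficulty is the branch-cut integral: to transfer the $t^{-3/2}$ decay inside the $M^2$-integration one needs the dispersive bound with a constant $C(M,f)$ that is integrable against the spectral weight, so that the $t\to\infty$ asymptotics can be exchanged with the $M^2$-integration. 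The hypothesis $m>0$ provides the mass gap $M^2\geq 4m^2>0$ that prevents the stationary-phase constant from blowing up at the lower endpoint — precisely what fails in the massless case of Remark \ref{rem:massless} — while $\lambda_2\neq 0$ furnishes the $M^{-2}(\log M^2)^{-2}$ fall-off of the spectral weight ensuring integrability at infinity. These two ingredients would be supplied by the decay lemmas collected in \hyperref[sec:appendix]{Appendix A}, which I would invoke to close the argument.
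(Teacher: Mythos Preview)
Your proof is correct and reaches the same formula \eqref{eq:retarded-position}, but the route differs from the paper's in an instructive way. The paper performs the inverse Fourier transform in $p_0$ directly: it writes $\tilde D_R(t,\vec p)=\frac{1}{2\pi}\int e^{ip_0t}\,S(-(p_0-i\epsilon)^2+|\vec p|^2)^{-1}\,\d p_0$, invokes Jordan's lemma (using $|1/S|\to 0$ on large arcs) to close the contour in the upper or lower half-plane according to the sign of $t$, and reads off the retardation ($\tilde D_R=0$ for $t<0$) together with the pole and branch-cut contributions in one stroke. You instead work in the $z$-variable: you write an unsubtracted Cauchy/K\"allen--Lehmann representation for $1/S(z)$ on $\mathbb{C}\setminus\big((-\infty,-4m^2]\cup\cS\big)$, identify each $1/(z-s)$ and $1/(z+M^2)$ with $-\hat\Delta_R$, and then invert term by term. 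Your method is conceptually cleaner and makes the spectral structure explicit up front; the paper's method has the advantage that retardation and the decomposition fall out of a single contour argument, with no separate verification of support needed.

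One small gap: you justify the unsubtracted dispersion relation (i.e.\ $1/S(z)\to 0$ at infinity) only under $\lambda_2\neq 0$, whereas the formula \eqref{eq:retarded-position} is asserted without that restriction. The paper's contour argument covers the general case by noting that in the worst situation $1/S$ is dominated by $c/F_a$, which still vanishes logarithmically. Your argument extends in the same way: if $\lambda_2=0$ but $g_2\neq 0$ then $|S(z)|\gtrsim |g_2 z|$, and if $g_2=0$ then $|S(z)|\gtrsim |\lambda_1|\tfrac{\lambda\hbar}{16\pi^2}|F_a(z)|$; either way $1/S(z)\to 0$ and no subtraction is needed. For the decay statement your discussion of the uniform-in-$M$ dispersive constant is exactly the point, and matches what the paper establishes via Lemma~\ref{le:decay} together with the integrability of the spectral weight when $\lambda_2\neq 0$.
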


\begin{proof}
We analyze the equation \eqref{eq:SCE-eq1-source} in the Fourier domain. Using the results given in Proposition \ref{prop:state}, it takes the form of
\[
	S(-(p_0-i0^+)^2+|\vec{p}|^2) \hat\psi_1(p_0,\vec{p}) = \hat{f}(p_0,\vec{p}),
\]
where $S$ was defined in \eq \eqref{eq:S}. Thus, the Fourier transform of the retarded operator $D_R$ yields
\begin{equation}
	\hat{D}_R(p_0,\vec{p})= \frac{1}{S(-(p_0-i0^+)^2+|\vec{p}|^2)},
\end{equation}
and hence its Fourier inverse transform
\[
	\tilde{D}_R(t,\mathbf{p})=\lim_{\epsilon\to 0^+}\lim_{R\to\infty}\frac{1}{2\pi}\int_{-R}^R \frac{1}{S(-(z-i \epsilon)^2 +|\mathbf{p}|^2)} \e^{itz} \d z 
\]
can be evaluated by means of standard methods of complex analysis. 
In view of the properties of the function $F_a(z)$ given in \eq \eqref{eq:F}, the function $h(z)\doteq 1/S(-(z-i\epsilon)^2+|\vec{p}|^2)$ is defined in $\mathbb{C}\setminus \{\{z-i\epsilon\leq -\sqrt{|\mathbf{p}|^2+4m^2} \}\cup\{z-i\epsilon\geq  \sqrt{|\mathbf{p}|^2+4m^2} \} \}$. For $|z|>R$, it holds that $|h(z)|< l(|z|)$, where the function $l(r)$ vanishes in the limit of large positive $r$, because, in the worse case, $1/S(-z^2+|\mathbf{p}|^2)$ is dominated by $c/F_a(-z^2+|\mathbf{p}|^2)$ for large $|z|$, for some constant $c$, and $|F_a(-z^2+|\mathbf{p}|^2)|$ grows as $\log|z|$ for large $|z|$. Therefore, from Jordan's lemma we may close the contour $\gamma$ in the upper or lower plane, according to the sign of $t$, with a semicircle which does not contribute to the integral in the limit $R\to\infty$.

The function $1/S(-(z-i\epsilon)^2+|\mathbf{p}|^2)$ has two poles for each element $s \in \mathcal{S}$ (the set of  zeros of $S(z)$).
Since $s\in \mathcal{S}$  is negative by hypothesis, we have that the poles are located on the line $\text{Im}(z)=i \epsilon$, and correspond to the complex numbers
\[
	z= i\epsilon \pm \sqrt{|\mathbf{p}|-s}.
\]
Furthermore, the function $1/S(-(z-i\epsilon)^2+|\mathbf{p}|^2)$ has two branch cuts located at $z = x+i\epsilon $, where $x^2 \geq |\vec{p}|^2 + 4m^2$.
Thus, $1/S$ is analytic in the lower half plane, and hence, we obtain that $\tilde{D}_R=0$ for $t < 0$ by Jordan's lemma, because we close the contour in the lower half plane for $t<0$. 

On the other hand, if $t>0$, then we close the contour $\gamma$ in the upper half plane, and hence we need to take care of both the poles and the branch cuts. 
In this case, if we deform the previous contour $\gamma$ to a new $\tilde{\gamma}$ in such a way to avoid both the poles and the cuts, then the result of the contour integral over $\tilde{\gamma}$ vanishes. Therefore,  the only two non-vanishing contributions in $\tilde{D}_R$, denoted by $\tilde{O}$ and $\tilde{C}$, are due to the poles and the branch cuts, respectively.

The contribution due to the poles can be directly evaluated using the Cauchy residue theorem, which yields
\[
	\tilde{O}(t,\vec{p}) = -\frac{2\pi i }{2\pi}\sum_{s\in \mathcal{S}}  \frac{1}{S'(s)} \left(\frac{e^{ i w_s  t }- e^{ -i w_s t}}{2w_s} \right) = \sum_{s\in \mathcal{S}} \frac{1}{S'(s)}  \frac{\sin (w_st)}{w_s}, 
\]
where $w_s \doteq \sqrt{|\vec{p}|^2 - s}$, and hence, in view of \eq \eqref{eq:retarded-Fourier},
\begin{equation}
\label{eq:Ox}
	O(x) = -\sum_{s\in \mathcal{S}} \frac{1}{S'(s)} \Delta_R(x,-s).
\end{equation}
The contribution due to the cuts can be combined in the following form
\[
\tilde{C}(t,\vec{p}) =  \lim_{\epsilon\to 0^+ }\frac{1}{2\pi} \int_{\sqrt{|\vec{p}|^2+4m^2}}^\infty      
\left[\frac{1}{S(-p_0^2+|\vec{p}|^2+i\epsilon)}-\frac{1}{S(-p_0^2+|\vec{p}|^2-i\epsilon)}\right]
       \left( e^{i p_0 t } - e^{-i p_0 t }\right) \d p_0.
\]
Thus, recalling \eq \eqref{eq:retarded-Fourier} again, it can be written in the position domain as
\[
	C(x) =  \lim_{\epsilon\to 0^+ }\frac{-i}{2\pi} \int_{4m^2}^\infty \left[\frac{1}{S(-M^2+i\epsilon)}-\frac{1}{S(-M^2-i\epsilon)}\right] \Delta_R(x,M^2) \d M^2.
\]
The discontinuity in the two cuts is only due to the imaginary part of $F_a(z)$, hence
\begin{equation}
\label{eq:Cx}
	C(x) = {-} \frac{\lambda \hbar}{16\pi^2} \int_{4m^2}^\infty \sqrt{1-\frac{4m^2}{M^2}} \frac{ (\lambda_2 M^2-\lambda_1 )}{ | (\lambda_2 M^2 - \lambda_1 ) \frac{\lambda \hbar}{16\pi^2} F_a(-M^2) + (g_2 M^2-g_1)|^2} \Delta_R(x,M^2) \d M^2,
\end{equation}
thus getting \eq \eqref{eq:retarded-position} by combining \cref{eq:Ox,eq:Cx}. Furthermore, the integral over $M^2$ present in $C(x)$ in \eq \eqref{eq:Cx} can always be taken, even when both $\lambda_2$ and $g_2$ vanish, because $\hat{\Delta}_R$ and $1/|F_a(-M)|^2$ decay as $1/M^2$ and $1/|\log(M)|^2$ for large $M$, respectively.

\medskip

The decay of $\psi_1=D_R(f)$ for large $t$ descends straightforwardly from Lemma \ref{le:decay} applied to $D_R(f)$, which implies that $\Delta_R(f,-s)$ decays as $1/t^{3/2}$ for large $t$, with $s\leq 0$. Hence, the contribution $O(f)$ due the poles given in \eq \eqref{eq:Ox} has the desired time decay property. The same holds for the contribution $C(f)$ due to the cuts given in \eq \eqref{eq:Cx}, because for $\lambda_2\neq 0$ the function $M^2/|S(-M^2)|$ is integrable in $\d M^2$, and, furthermore, $f$ is smooth and compactly supported in time, so its time Fourier transform is a Schwartz function. 
\end{proof}

\begin{remark}
	\label{rem:cuts}
	From the form of the kernel of $D_R$ obtained in \eq \eqref{eq:retarded-position}, a generic past compact solution $\psi_1=D_R(f)$ defined in \eq \eqref{eq:solution-ps1} can be decomposed into two parts, so that
	\[
		\psi_1(x) = \psi_1^{O}(x) + \psi_1^C(x),
	\]
	where $\psi_1^{O} = -\sum_{s\in \mathcal{S}} \frac{1}{S'(s)} \Delta_R(f,-s)$ denotes the contribution due to the poles of $1/S$, while $\psi_1^C$ is the contribution due the cuts. We observe that, while there is a chance to determine $\psi_1^{O}$ by means of a finite number of initial conditions given at some time $t_1$ in the future of $\text{supp}f$, we expect that it is not possible to determine $\psi_1^C$ with a finite number of initial conditions, because the integration of $M^2$ is over uncountably many points.
	
	In spite of this fact, we notice that the homogeneous equation \eqref{eq:SCE-eq1-hom} may still, in some cases, give origin to a well-posed initial value problem to uniquely determine spatially compact solutions. Actually, the contribution due the cuts cannot enter the construction of the solutions of the homogeneous equation on the whole space.
	The reason is that the kernel of the multiplicative operator $T$, which acts on $\mathcal{S}(\mathbb{R}^4)$ and is defined as 
	\begin{align*}
		T(z) &\doteq \frac{S(z)}{\prod_{s\in\mathcal{S}}(z-s)},
	\end{align*}
	contains only $0$, with $z = -(p_0 - i\epsilon)^2 + |\vec{p}|^2$. Therefore, only the contributions due the poles can give origin to non trivial solutions of the homogeneous equation \eqref{eq:SCE-eq1-hom} written as $S(z) \hat{\psi_1} = 0$.
\end{remark}

The decay rate of the smooth past compact solutions $\psi_1$ proved in Theorem \ref{theo:retarded} is the same which was obtained by means of Strichartz estimates for real, massive quantum scalar fields in four-dimensional Minkowski spacetime \cite{Strichartz1977wave}. Actually, this behaviour is justified by the form of the fourth-order differential equation \eqref{eq:SCE-eq1-source}, which is composed of massive Klein-Gordon like operators on $(\cM,\eta)$. Thus, the retarded fundamental solution is still a combination of Klein-Gordon like fundamental solutions, and hence the past compact solutions given in \eq \eqref{eq:solution-ps1} inherit the same late-time behaviour estimated for real Klein-Gordon fields.

Eventually, we can now to discuss the solutions of the linearized semiclassical equation without source given in  \eq \eqref{eq:SCE-eq1-source}. 

\begin{theorem}
\label{theo:main}
	Consider the semiclassical equation \eqref{eq:SCE-psi-linear} written as 
	\[
	    (g_2\square - g_1) \psi_1(x) = (\lambda_1 - \lambda_2 \square)\expvalv{\phi^2}^{(\text{\normalfont lin})}(x).
	\]
	Fix as non-vanishing constants at least one of the two $g_i$, and at least one of the $\lambda_i$, assume that the inequality $g_2 \lambda_1-\lambda_2 g_1 \geq 0$ holds, and set $-4m^2<a<0$.
 
	Let $\mathcal{S}\subset (-4m^2,\infty) \cup \{-\lambda_1/\lambda_2\}\subset \mathbb{R}$ be the set of zeros of $S(z)$ given in \eq \eqref{eq:S}. As discussed in Proposition \ref{prop:zeros}, $\mathcal{S}$ contains one, two or no elements depending on the parameters $\lambda_i$, $g_i$ and $a$. If $\mathcal{S} = \emptyset$, then \eq \eqref{eq:SCE-psi-linear} admits no solutions. If $\mathcal{S} \neq \emptyset$, let $\psi_1(t,\vec{x})$ be a smooth solution of \eq \eqref{eq:SCE-psi-linear} with spatial compact support. Then its spatial Fourier transform is of the form
	\[
		\tilde{\psi}_1(t,\vec{p}) = \sum_{s\in\mathcal{S}} \left( C^s_+(\vec{p}) \e^{ + i t \sqrt{|\vec{p}|^2-s}}
		+C^s_-(\vec{p}) \e^{ - i t \sqrt{|\vec{p}|^2-s}}\right).
	\] 
	Moreover, if $\mathcal{S}$ contains only negative elements, then each solution $\psi_1$ of \eq \eqref{eq:SCE-psi-linear} is uniquely fixed by the initial values at $t=0$
	\[
		\psi_1^{(j)}\at 0,\vec{x} \ct = \varphi^j(\vec{x}), \qquad j\in \{0,\dots , 2 |\mathcal{S}|\},
	\]
	where $|\mathcal{S}|$ is the cardinality of $\mathcal{S}$, and $\varphi^j\in C^{\infty}_0(\mathbb{R}^3)$. Furthermore, in this case, $\psi_1(t,\vec{x})$ decays for large time at least as $1/t^{3/2}$. 
\end{theorem}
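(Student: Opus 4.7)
The plan is to pass to the Fourier domain, where the homogeneous equation \eqref{eq:SCE-eq1-hom} becomes multiplication by the symbol $S$: by Proposition \ref{prop:state} it reduces to
\[
S\bigl(-(p_0-i0^+)^2+|\vec{p}|^2\bigr)\,\hat{\psi}_1(p_0,\vec{p}) = 0.
\]
If $\mathcal{S}=\emptyset$, then Proposition \ref{prop:zeros} excludes complex zeros of $S$ as well, so $S$ is pointwise invertible on the relevant mass shell and the only tempered distributional solution is $\hat{\psi}_1=0$; hence no smooth, spatially compactly supported $\psi_1$ can exist.

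If $\mathcal{S}\neq\emptyset$, I would factor $S(z)=T(z)\prod_{s\in\mathcal{S}}(z-s)$ as in Remark \ref{rem:cuts}. Since that remark establishes that $T$ has trivial kernel as a multiplier on $\mathcal{S}(\mathbb{R}^4)$, the equation is equivalent to
\[
\prod_{s\in\mathcal{S}}\bigl(-(p_0-i0^+)^2+|\vec{p}|^2-s\bigr)\,\hat{\psi}_1(p_0,\vec{p})=0.
\]
When $\mathcal{S}$ consists only of negative reals, these factors vanish exactly on the smooth disjoint hyperboloids $p_0=\pm\omega_s(\vec{p})$ with $\omega_s(\vec{p})\doteq\sqrt{|\vec{p}|^2-s}$. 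Distributional solutions of a polynomial multiplier equation of this kind are combinations of delta functions supported on the zero set, and integrating out $p_0$ yields the announced form of $\tilde{\psi}_1(t,\vec{p})$ with smooth coefficients $C^s_\pm(\vec{p})$.

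Next I would determine the $2|\mathcal{S}|$ functions $C^s_\pm$ from the initial data: evaluating the $t$-derivatives of the plane-wave expansion at $t=0$ against the spatial Fourier transforms $\tilde{\varphi}^j$ produces, at each $\vec{p}$, a linear system whose coefficient matrix is Vandermonde in the $2|\mathcal{S}|$ pairwise distinct frequencies $\{\pm\omega_s(\vec{p}):s\in\mathcal{S}\}$. This matrix is invertible pointwise in $\vec{p}$, and the $C^s_\pm$ inherit the Schwartz regularity of the $\tilde{\varphi}^j$. Finally, each summand $\int C^s_\pm(\vec{p})\,e^{\pm it\omega_s(\vec{p})-i\vec{p}\cdot\vec{x}}\,d^3\vec{p}$ is the standard free Klein--Gordon evolution of Schwartz data with squared mass $-s>0$, so Lemma \ref{le:decay} (or equivalently the classical dispersive estimate in $3{+}1$ dimensions) gives $1/t^{3/2}$ decay, and this transfers to $\psi_1$.

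The main obstacle is the step that invokes Remark \ref{rem:cuts}: one must argue rigorously that the branch cut of $F_a$, which was essential for the continuum contribution to $D_R$ in Theorem \ref{theo:retarded}, does not enter the homogeneous solution space. Concretely, division by $T(z)$ must be a well-defined operation on tempered distributions supported on the mass shells $z=s\in\mathcal{S}$, and this is precisely where the hypotheses $-4m^2<a<0$ and $g_2\lambda_1-\lambda_2 g_1\ge 0$ are used, via Proposition \ref{prop:zeros}, to ensure that $T$ does not vanish on these shells and that $\mathcal{S}$ really exhausts the support of any solution.
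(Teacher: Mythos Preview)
Your proposal is correct and follows essentially the same route as the paper: pass to the Fourier domain, use Proposition~\ref{prop:zeros} and the factorization of Remark~\ref{rem:cuts} to conclude that the homogeneous solution is supported only on the mass shells $z=s\in\mathcal{S}$, then solve a linear system for the coefficients $C^s_\pm$ in terms of the initial data and invoke Lemma~\ref{le:decay} for the decay. The only cosmetic difference is that the paper treats the cases $|\mathcal{S}|=1$ and $|\mathcal{S}|=2$ separately with explicit $2\times 2$ and $4\times 4$ matrices (computing the determinant $-4(n_1-n_2)^2 w_1 w_2$ by hand), whereas your Vandermonde argument handles both at once; your identification of the division-by-$T$ step as the delicate point is also more explicit than the paper's own proof, which simply asserts the plane-wave form without further comment.
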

\begin{proof}
We analyze \eq \eqref{eq:SCE-psi-linear} in the Fourier domain. Using the results given in Proposition \ref{prop:state}, it takes the form:
\begin{equation}
\label{eq:main-theo} 
	\left((\lambda_1 + \lambda_2 (-p_0^2+|\vec{p}|^2)) 
	\frac{\lambda \hbar}{16\pi^2}  F_a(-(p_0-i0^+)^2+|\vec{p}|^2) + (g_2(-p_0^2+|\vec{p}|^2) + g_1) \right)\hat\psi_1(p_0,\vec{p})=0.
\end{equation} 
Let $\tilde{\psi}(t,\vec{p})$ be the spatial Fourier transform of a generic solution $\psi_1(t,\vec{x})$ of the form given in \eq \eqref{eq:solution-ps1}. This is a linear combination of $\e^{i p^j_0 t}$ with $z = -{p^j_0}^2+|\vec{p}|^2 \in \mathcal{S}$, where $\mathcal{S}$ is the set of points of the complex plane in which the function $S$ given in \eq \eqref{eq:S} vanishes, namely in which \eq \eqref{eq:CasoD} holds.

According to Proposition \ref{prop:zeros}, we have that $\mathcal{S}$ must be contained either in $(-4m^2,\infty)\subset \mathbb{C}$, or in $(-4m^2,\infty)\cup\{-\lambda_1/\lambda_2\}\subset \mathbb{C}$ for $\lambda_2\neq0$, $g_2 \lambda_1/\lambda_2 = - g_1$, and $\lambda_1/\lambda_2 \leq 4m^2$. 

\medskip

According to the number of negative solutions of \eq \eqref{eq:SCE-psi-linear}, the explicit form of $\psi_1(t,\vec{x})$ reads as follows.
If $\mathcal{S}$ contains only one negative solution $x=-\tilde{n}$, with $\tilde{n}\geq 0$, then any solution having smooth compactly supported initial data at $t=0$ is of the form
\[
	\psi_1(t,\vec{x}) = \int_{\mathbb{R}^3} \left(C_+(\vec{p}) \e^{ i w_{\tilde{n}} t } + C_-(\vec{p})\e^{ -i w_{\tilde{n}} t} \right) \e^{ i {\vec{p} \cdot\vec{x} }}  \d \vec{p}, 
\]
where $w_{\tilde{n}}(\vec{p})  = \sqrt{|\vec{p}|^2 +\tilde{n}}$, and $C_\pm$ are obtained from $({\hat{\varphi}}^0(\vec{p}),{\hat{\varphi}}^1(\vec{p}))$ by solving 
\[
	\begin{pmatrix} {\hat{\varphi}}^0\\ {\hat{\varphi}}^1 \end{pmatrix} =
	\begin{pmatrix}  1&1\\ i w_{\tilde{n}}&- i w_{\tilde{n}} \end{pmatrix} 
	\begin{pmatrix} C_+\\ C_- \end{pmatrix},
\]
which yields
\[
	\begin{pmatrix} C_+\\ C_- \end{pmatrix} = \frac{i}{2w_{\tilde{n}}}
	\begin{pmatrix} - i w_{\tilde{n}}&-1\\ -i w_{\tilde{n}}&1 \end{pmatrix} 
	\begin{pmatrix} {\hat{\varphi}}^0\\ {\hat{\varphi}}^1 \end{pmatrix}, 
\]
namely
\[
	C_+(\vec{p}) =  {\hat{\varphi}}^0(\vec{p})  - \frac{i}{2 w_{\tilde{n}}} {\hat{\varphi}}^1(\vec{p}), \qquad C_-(\vec{p}) =  {\hat{\varphi}}^0(\vec{p})  + \frac{i}{2 w_{\tilde{n}}} {\hat{\varphi}}^1(\vec{p}).
\]
Thus, thanks to Lemma \ref{le:decay}, we obtain the desired decay of $\psi(t,\vec{x})$ for large $t$.

If $\mathcal{S}$ contains only two distinct negative elements, four initial data are needed to fix the solution. Denoting with $s_1=-n_1$ and $s_2=-n_2$, $n_i\geq 0$, the two distinct elements of $\mathcal{S}$, 
 the linearized solution of the semiclassical equation \eqref{eq:SCE-psi-linear} is a combination of two 
solutions of the Klein Gordon equation with different square masses $n_i$. In this case, the solution with smooth compactly supported initial data $(\varphi^0(\vec{x}),{\varphi}^1(\vec{x}),\varphi^2(\vec{x}),\varphi^3(\vec{x}))$ at $t=0$ is of the form
\[
	\psi_1  (t,\vec{x}) = \int_{\mathbb{R}^3} \left(C^1_+(\vec{p}) \e^{ i w_{1} t } + C^1_-(\vec{p})\e^{ -i w_{1} t} \right)
	\e^{ i {\vec{p} \cdot\vec{x} }} \d \vec{p} + \int_{\mathbb{R}^3} \left(C^2_+(\vec{p}) \e^{ i w_{2} t } + C^2_-(\vec{p})\e^{-i w_{2} t} \right) \e^{ i {\vec{p} \cdot\vec{x} }} \d \vec{p},
\]
where $w_{i}(\vec{p}) = \sqrt{|\vec{p}|^2 +n_i}$, and $C^i_\pm(\vec{p})$ are obtained from $(\hat{\varphi}^0(\vec{p}),{\hat{\varphi}}^1(\vec{p}),\hat{\varphi}^2(\vec{p}),\hat{\varphi}^3(\vec{p}))$ by
solving 
\[
	\begin{pmatrix} \hat{\varphi}^0 \\ \hat{\varphi}^1 \\ \hat{\varphi}^2 \\ \hat{\varphi}^3 \end{pmatrix} =
	\begin{pmatrix}  1&1&1&1\\ i w_{1}&- i w_{1} & i w_{2}&- i w_{2} \\ - w_{1}^2&-  w_{1}^2 & - w_{2}^2&-  w_{2}^2 \\ -i w_{1}^3& + i w_{1}^3 & -i w_{2}^3& + i w_{2}^3   \end{pmatrix} 
	\begin{pmatrix} C^1_+\\ C^1_-\\ C^2_+\\ C^2_- \end{pmatrix}.
\]
The determinant of that matrix is equal to $-4(n_1-n_2)^2 w_1w_2$, and hence $C^i_\pm(\vec{p})$ can be written  as linear combinations of $(\hat{\varphi}^0(\vec{p}),{\hat{\varphi}}^1(\vec{p}),\hat{\varphi}^2(\vec{p}),\hat{\varphi}^3(\vec{p}))$ as
\[
	\begin{pmatrix} C^1_+\\ C^1_-\\ C^2_+\\ C^2_- \end{pmatrix} =  \frac{1}{2 (n_1-n_2)}
	\begin{pmatrix}  - w_2^2&i \frac{w_2^2}{w_1}&-1& \frac{i}{w_1}  \\ - w_2^2 &-i \frac{w_2^2}{w_1} & -1& -\frac{i}{w_1} \\ w_{1}^2&- i \frac{w_1^2}{w_2}& 1& -\frac{i}{w_2} \\ w_{1}^2&  i \frac{w_1^2}{w_2}& 1& \frac{i}{w_2}  \end{pmatrix} 
 	\begin{pmatrix} \hat{\varphi}^0 \\ \hat{\varphi}^1 \\ \hat{\varphi}^2 \\ \hat{\varphi}^3 \end{pmatrix}.
\]
Notice that these coefficients have either $w_1$ or $w_2$ in the denominator, in the worse case. Thus, the desired decay of $\psi(t,\vec{x})$ for large $t$ is obtained by applying Lemma \ref{le:decay} as before.

	\captionsetup[figure]{labelfont={bf},labelformat={default},name={Figure}}
	\begin{figure}[!htb] \centering{\includegraphics{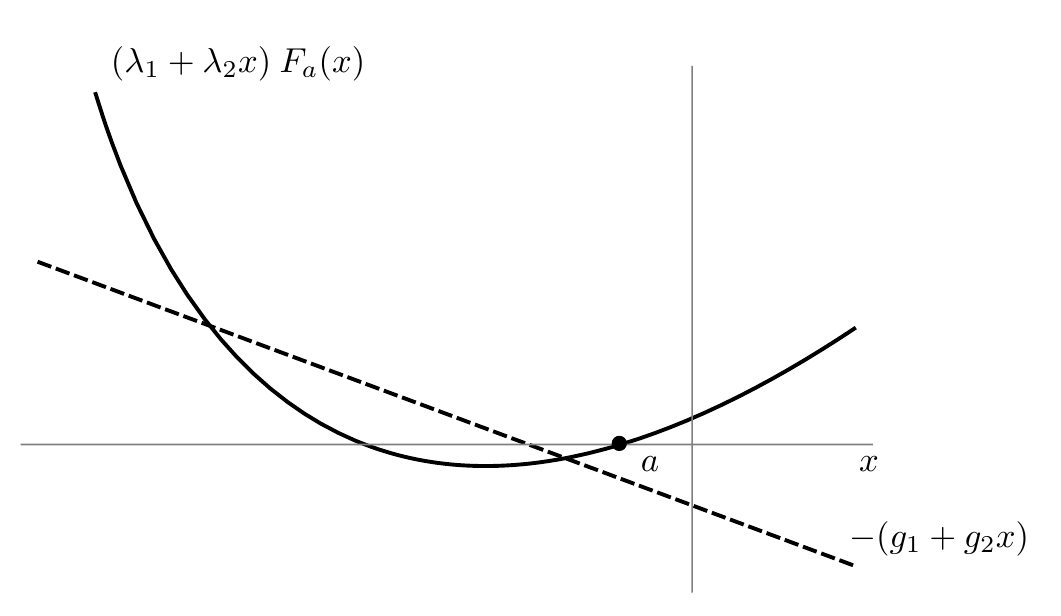}}
		\medskip
		\caption{\footnotesize Plots of the qualitative behaviours of $(\lambda_1+\lambda_2 x)(\frac{\lambda \hbar}{16\pi^2})F_a(x)$ and $(g_1+g_2 x)$ in $\aq -4m^2, \infty\ct$, where the constants are such that $g_i > 0$, $\lambda_i>0$, and $-4m^2<-\frac{\lambda_1}{\lambda_2}<-\frac{g_1}{g_2}<a<0.$}
		\label{Fig:fig_a}
	\end{figure}
	
	\begin{figure}[!htb] \centering{\includegraphics{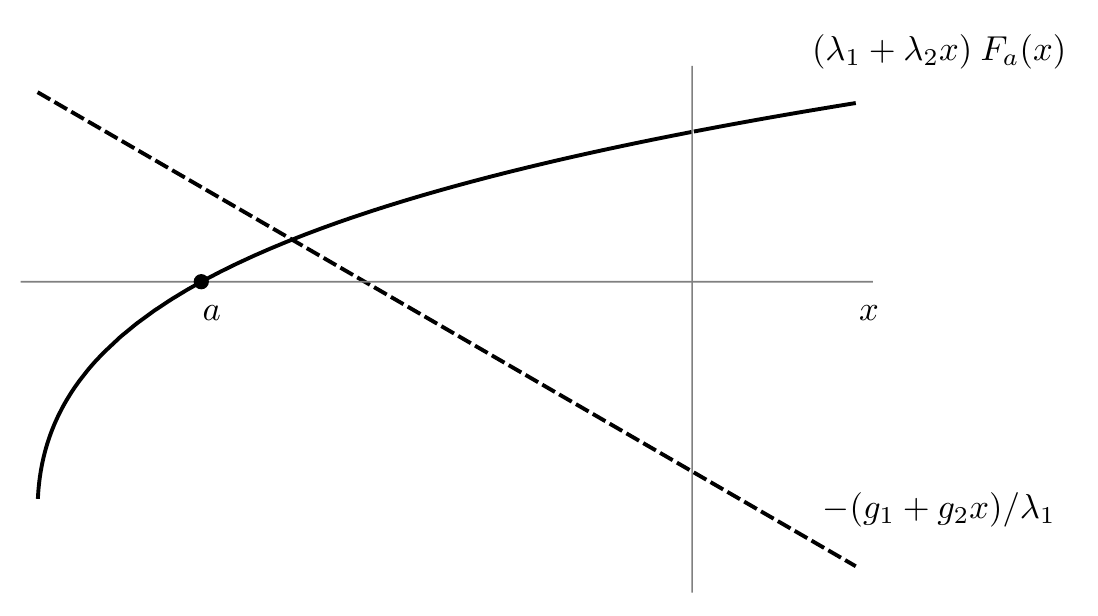}}
		\medskip
		\caption{\footnotesize Plots of the qualitative behaviours of $F_a(x)$ and $(g_1+g_2 x)/\lambda_1$ in $\aq -4m^2, \infty\ct$, where the constant are such that $g_i > 0$, $\lambda_2=0$, $\lambda_1>0$, $\frac{\lambda \hbar}{16\pi^2}F_a(-4m^2)<-(g_1-g_2 4m^2)/\lambda_1$, and $\frac{\lambda \hbar}{16\pi^2}F_a(0)>-g_1/\lambda_1$.}		
		\label{Fig:fig_b}
	\end{figure}	
\end{proof}

The previous theorem establishes that, if the space of solutions of \eq \eqref{eq:CasoReale} contains only negative elements, then all solutions of the linearized semiclassical equation \eqref{eq:SCE-psi-linear} with compactly supported initial values decay at large times. In the following corollary, we identify certain sufficient (but not necessary) conditions on the parameters $\lambda,\lambda_i,g_i,a$ which ensure such a behavior.
\begin{corollary}
\label{cor:cases}
Under the hypotheses of Theorem \ref{theo:main}, the space of solutions of \eq \eqref{eq:CasoReale} contains only negative elements if the following sufficient conditions on the parameters $\lambda,\lambda_i,g_i,a$ hold:

\begin{itemize}
	\item[a)] If $\lambda_2 = 0$, $\frac{g_2}{\lambda_1} \geq 0$, if $-\frac{g_1}{\lambda_1 } \leq \frac{\lambda \hbar}{16\pi^2} F_a(0)$, and $\frac{\lambda \hbar}{16\pi^2} F_a(-4m^2) \leq  -\frac{g_1}{\lambda_1}+4m^2 \frac{g_2}{\lambda_1}$, then $\mathcal{S}$ contains only one negative solutions.
	\item[b)] If $-\lambda_1/\lambda_2 < -g_1/g_2< a <0$, then $\mathcal{S}$ contains only negative solutions, and $|\mathcal{S}|$ is either $1$ or $2$.
	\item[c)] If $-4m^2<-\lambda_1/\lambda_2 < a <0$, $g_2>0$ and $0\leq -\frac{g_1}{\lambda_1 } \leq \frac{\lambda \hbar}{16\pi^2} F_a(0)$, then $\mathcal{S}$ contains only negative solutions, and $|\mathcal{S}|$ is either $1$ or $2$.
\end{itemize}
If these conditions hold, then the corresponding solutions of \eq \eqref{eq:SCE-psi-linear} with compact spatial support decay at least as $1/t^{3/2}$ at large times.
\end{corollary}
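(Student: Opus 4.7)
By Proposition \ref{prop:zeros}, $\mathcal{S}$ coincides with the real solutions of
\[
(\lambda_1+\lambda_2 s)\tfrac{\lambda\hbar}{16\pi^2}F_a(s)=-(g_1+g_2 s),\qquad s\in(-4m^2,\infty)\cup\{-\lambda_1/\lambda_2\}.
\]
I would treat the three cases separately, using only the properties of $F_a$ listed in Proposition \ref{prop:state}: strict monotonicity, strict concavity on $(-4m^2,\infty)$, continuity at $-4m^2$, and $F_a(a)=0$. The task is to show that, in each case, all real roots lie in $(-4m^2,0)$; the claimed $1/t^{3/2}$ decay then follows at once from Theorem \ref{theo:main}.

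\emph{Case (a).} For $\lambda_2=0$ I would rewrite the equation as $G(s)=0$ with
\[
G(s)\defeq\tfrac{\lambda\hbar}{16\pi^2}F_a(s)+\tfrac{g_1}{\lambda_1}+\tfrac{g_2}{\lambda_1}s.
\]
Since $F_a$ is strictly increasing and $g_2/\lambda_1\geq 0$, $G$ is strictly increasing, hence has at most one zero. The two endpoint hypotheses of (a) translate directly into $G(-4m^2)\leq 0$ and $G(0)\geq 0$; the intermediate value theorem then yields a unique zero in $[-4m^2,0]$, and strict monotonicity excludes any zero on $(0,\infty)$.

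\emph{Cases (b) and (c).} With $\lambda_2\neq 0$ I would use the reformulation already employed inside the proof of Proposition \ref{prop:zeros},
\[
\tfrac{\lambda\hbar}{16\pi^2}F_a(s)+\tfrac{g_2}{\lambda_2}=I(s),\qquad I(s)\defeq\frac{\lambda_2 g_1-\lambda_1 g_2}{\lambda_2^{2}\,(s+\lambda_1/\lambda_2)}.
\]
The left-hand side is strictly increasing and strictly concave on $(-4m^2,\infty)$, while the hypothesis $g_2\lambda_1-\lambda_2 g_1\geq 0$ makes $I$ either identically zero or strictly decreasing on each component of its domain. In case (b), the ordering $-\lambda_1/\lambda_2<-g_1/g_2<a<0$ places the asymptote of $I$ to the left of $-g_1/g_2$; evaluating at $s=a$ (where the LHS equals $g_2/\lambda_2$ by $F_a(a)=0$) and at the endpoints of $(-4m^2,\infty)$, the two curves can meet only in $(-4m^2,0)$, producing one or two negative roots. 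In case (c) the asymptote $-\lambda_1/\lambda_2$ itself sits in $(-4m^2,0)$: on its right branch, $I(s)\to+\infty$ as $s\to(-\lambda_1/\lambda_2)^{+}$, while the hypothesis $-g_1/\lambda_1\leq\tfrac{\lambda\hbar}{16\pi^2}F_a(0)$ together with $g_2>0$ gives LHS$(0)\geq I(0)$, forcing at least one crossing in $(-\lambda_1/\lambda_2,0)$ and ruling out crossings in $(0,\infty)$; monotonicity/concavity on the left branch allows at most one further intersection in $(-4m^2,-\lambda_1/\lambda_2)$. The qualitative pictures illustrated by Figures~\ref{Fig:fig_a} and \ref{Fig:fig_b} are exactly what this argument formalises.

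\textbf{Main obstacle.} The delicate part is the sign bookkeeping in (b) and (c): one has to confirm that the decreasing curve $I$ and the increasing concave curve $\tfrac{\lambda\hbar}{16\pi^2}F_a+g_2/\lambda_2$ never meet for $s>0$, which amounts to controlling their relative position at $s=0$ via the placement of $-g_1/g_2$ or $-g_1/\lambda_1$ with respect to $a$ and to $F_a(0)$. Once the inclusion $\mathcal{S}\subset(-4m^2,0)$ is secured, Theorem \ref{theo:main} delivers the $1/t^{3/2}$ decay for the corresponding spatially compactly supported solutions of \eqref{eq:SCE-psi-linear}, concluding the proof.
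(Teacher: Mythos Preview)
Your plan is correct and matches the paper's approach: reduce to the real equation \eqref{eq:CasoReale}, locate the roots using the properties of $F_a$ from Proposition~\ref{prop:state}, and then invoke Theorem~\ref{theo:main} for the decay. The paper's proof is essentially a graphical argument (referring to Figures~\ref{Fig:fig_a} and \ref{Fig:fig_b}), and you are supplying the analytic version of the same reasoning; in particular your treatment of case~(a) via strict monotonicity of $G$ is slightly cleaner than the paper's concavity-based count, since an increasing $F_a$ plus a non-negative linear term immediately gives at most one root.
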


\begin{proof}
\begin{itemize}
	\item[a)] Case $\lambda_2 = 0$, $\frac{g_2}{\lambda_1} \geq 0$. \Eq \eqref{eq:CasoReale} takes the form 
	\[
		\frac{\lambda \hbar}{16\pi^2} F_a(x) = - \left(\frac{g_1}{\lambda_1 }+\frac{g_2}{\lambda_1 } x\right), \qquad  x\in (-4m^2,+\infty).
	\]
	The real part of that equation has now a positive solution if $-\frac{g_1}{\lambda_1 } > \frac{\lambda \hbar}{16\pi^2} F_a(0)$. From the plot displayed in Figure \ref{Fig:fig_b} we infer that a single negative solution appears if $-\frac{g_1}{\lambda_1 } \leq \frac{\lambda \hbar}{16\pi^2} F_a(0)$ and $ \frac{\lambda \hbar}{16\pi^2} F_a(-4m^2) \leq  -\frac{g_1}{\lambda_1}+4m^2 \frac{g_2}{\lambda_1}$, while no solutions exist otherwise.
	\item[b)] Case $a<0$, $-\lambda_1/\lambda_2 \leq -g_1/g_2\leq a <0$. From the plot displayed in Figure \ref{Fig:fig_a}, it is found that  all the possible solutions are negative, and in particular either one or two solutions exist.
	\item[c)] Case $-\lambda_1/\lambda_2 < a <0$  and $0\leq -\frac{g_1}{\lambda_1 } \leq \frac{\lambda \hbar}{16\pi^2} F_a(0)$. This case is similar to the one displayed in Figure \ref{Fig:fig_a}, but now the line $-g_2x + g_1$ intercepts the vertical line between $y=0$ and $y=\frac{\lambda \hbar}{16\pi^2} F_a(0)$. Hence, all the possible solutions are negative, and there is always at least one solution.
\end{itemize}

\medskip

Finally, the proof follows by applying the results of Theorem \ref{theo:main}.
\end{proof}

To summarize, we have thus proven that the desired decay as $1/t^{3/2}$ for large time $t$ of the solutions of the linearized Einstein equation \eqref{eq:SCE-psi-linear} with compact spatial support holds if and only if the zeros of $S$ defined in \eq \eqref{eq:S} are all contained in the negative real axis. On the contrary, if some zeros were located in the positive real axis, then unstable runaway solutions would destabilize the background configuration. Finally, if no zeros are present in $S$, then \eq \eqref{eq:SCE-psi-linear} does not admit solutions, but its counterpart with source given in \eq \eqref{eq:SCE-eq1-source} still has non vanishing past compact solutions which decay in time, due to the contribution given by the source through the branch cuts.

\begin{remark}
	According to the results presented in Proposition \ref{prop:zeros}, every solution of \eq \eqref{eq:CasoReale} is located in the positive real axis whenever the quantum field $\phi$ is massless, i.e., when $m=0$, even if the inequality \eqref{eq:condition-param} holds (recalling that $F_a(z)$ is reduced to \eq \eqref{eq:F-massless} in the massless case). For this reason, we may expect that any stability result cannot be achieved for massless fields, at least when homogeneous equations are taken into account, even if compactly supported initial data are selected.
	
	This observation is in accordance with the stability issue established in \cite{Horowitz1980wf}, where it was shown that exponentially growing runaway solutions appear when the back-reaction of a quantum Maxwell field interacting with a weak gravitational field is taken into account in the framework of semiclassical gravity.
\end{remark}
		
\subsection{Applications in the cosmological model}	

\label{sec:cosmo}

The analysis employed in the toy model presented in \eqs \eqref{eq:system-eq1} and \eqref{eq:system-eq2} can be used to guess the behaviour of the linearized solutions of the semiclassical Einstein equations in cosmological spacetimes, where matter is modelled by a massive quantum scalar field $\phi$. 

In the cosmological scenario we are considering here, the spacetime geometry is described by the metric $g_{\mu\nu}$ of the flat Friedmann-Lema\^itre-Robertson-Walker spacetime. In conformal coordinates $(\tau,x_1,x_2,x_3)$, the metric is conformally flat and reads
\begin{equation}
	\d s^2 = a(\tau)^2 \at -\d \tau^2 + \d \vec{x}^2 \ct,
\end{equation}
where the scale factor $a(\tau)$ of the universe represents the unique degree of freedom of the spacetime.
The dynamics of this universe is governed by the back-reaction of a linear quantum scalar field $\phi$, whose equation of motion is
\begin{equation}
\label{eq:scalar-field}
	\square \phi -m^2 \phi -\xi R \phi = 0,
\end{equation}
where $\xi$ denotes the coupling constant to the scalar curvature. Since there is an unique degree of freedom in this class of spacetimes, the dynamics of $a$ is determined, up to a constraint, by the trace of the semiclassical Einstein equations \cite{Meda2021exi}
\begin{equation}
\label{eq:SCE-cosmo}
	- R + 4\Lambda = 8 \pi G \expvalom{T}.
\end{equation}
The expectation value of the trace of the quantum stress-energy tensor associated to $\phi$ in a quantum state $\omega$ is 
\begin{equation}
\label{eq:traceT}
	\expvalom{T} = \left(3\left(\xi - \frac{1}{6}\right) \square - m^{2}\right)\expvalom{\phi^2} + \frac{1}{4 \pi^{2}}\left[v_{1}\right] + \alpha_1 m^4 - \alpha_2 m^2 R + \alpha_3 \square R,
\end{equation}
where the renormalization constants $\alpha_i$ are not fixed by the model, but describe the regularisation freedom present in the definition of the stress-energy tensor as normal-ordered Wick observable \cite{Hollands2001wick,Hollands2005stress,Hack2016cosm}. 
The constants $\alpha_1$ and $\alpha_2$ correspond to renormalizations of the cosmological constant and the Newton constant, respectively, and thus they can be reabsorbed in a redefinition of $\Lambda$ and $G$; on the contrary, $\alpha_3$ is of pure quantum nature and cannot be reabsorbed in any corresponding classical parameter of the theory.

Moreover, the coefficient $[v_1]$ appearing in \eq \eqref{eq:traceT} corresponds to the so-called quantum trace anomaly of the model, and reads 
\begin{equation}
\label{eq:trace-anomaly}
	[v_1] = \frac{1}{2880\pi^2 }\left(C^{abcd}C_{abcd} + R^{ab}R_{ab} -\frac{1}{3} R^2\right),
\end{equation}
up to contributions which can be reabsorbed by a redefinition of $\alpha_i$. Here, $C_{abcd}$ is the Weyl tensor, $R_{ab}$ the Ricci tensor and $R$ the Ricci scalar \cite{Wald1978trace,Moretti2003stress,Hollands2005stress}. 

We are interested in studying the linearized perturbation of this cosmological system around a spacetime which is a solution of the semiclassical Einstein equations written in the form of \eq \eqref{eq:SCE-cosmo}. As a first step of this analysis, and for the sake of simplicity, we consider as background solution a spacetime with vanishing curvature, namely the Minkowski spacetime. Under this assumption, we can obtain a formal correspondence between the linearization of \eq \eqref{eq:SCE-cosmo} and the linearized semiclassical equation \eqref{eq:SCE-psi-linear}, viewing $R$ as the perturbative external field $\psi_1$ over a vanishing background $\psi_0=0$. In view of this correspondence, the cosmological constant $\Lambda$ is a zeroth-order contribution which can be assumed to vanish, whereas the trace anomaly given in \eq \eqref{eq:trace-anomaly} is at least quadratic in the components of the Riemann curvature tensor $R_{abcd}$, and thus it is negligible at linear order in $R$. 

Taking into account all of this and \eq \eqref{eq:traceT}, \eq \eqref{eq:SCE-cosmo} takes the form of the linearized semiclassical equation \eqref{eq:SCE-psi-linear} through the following correspondence between the cosmological parameters and the set of constants $(g_1,g_2,\lambda_1,\lambda_2)$:
\begin{equation}
\label{eq:cosmo-param}
	g_1 = -\frac{1}{8\pi G}, \qquad g_2 = \alpha_3, \qquad \lambda=\xi, \qquad \lambda_1 = m^2, \qquad \lambda_2 = 3\left(\xi -\frac{1}{6}\right).
\end{equation}

In the cosmological framework, $g_1$ turns to be a fixed negative parameter (in Planck's units, $\hbar = 1$ and $(8\pi G)^{-1} = m_P^2/8\pi$, where $m_P$ is the Planck mass), while, on the other hand, $\lambda$ can be fixed to be strictly positive and different from $1/6$ by assuming non-minimally and non-conformally coupled fields, i.e, $\xi \neq 0, 1/6$; hence, $\lambda_2 \neq 0$. On the contrary, both $g_2$ and $\lambda_2$ are free parameters of the semiclassical theory, whose signs can be chosen such that the inequality \eqref{eq:condition-param} holds, i.e.,
\begin{equation}
\label{eq:cosmo-condition}
	\alpha_3 \frac{m^2}{m_P^2} \geq - \frac{3}{8\pi} \left(\xi -\frac{1}{6}\right), \qquad \alpha_3 \in \mathbb{R}.
\end{equation}
Under these assumptions, there are choices of the parameters $(m^2,\xi,a)$ for which the cosmological version of \eq \eqref{eq:CasoReale} admits only negative solutions. For example, by choosing $\xi > 1/6$, $\alpha_3 > 0$, $a > -4m^2$, and sufficiently large $m^2$ we may apply Corollary \ref{cor:cases}, which ensures that only negative solutions exist.
Namely, in these cases solutions of the cosmological linearized semiclassical Einstein equations written as in \eq \eqref{eq:SCE-cosmo} with spatial compact support decay to zero for large times, thus showing the stability of the chosen background. 

On the other hand, one may expect that too large values of $m^2$, even beyond the Planck scale $m_P$, would be physically unacceptable for quantum fields describing elementary particles. In this viewpoint, the result is similar to the one obtained in \cite{Randjbar-Daemi981stab} for massive quantum scalar fields in flat spacetime. Firstly, the conditions stated here are only sufficient, so other cases which provide stable solutions cannot be excluded a priori, for different choices of the parameters $(m^2,\xi,\gamma,a)$. Secondly, and most importantly, it is expected that the linearized perturbations in a more realistic cosmological model should be sourced by $f\in {C}^\infty_0(\cM)$ localized somewhere in the past. This source may have a quantum origin, for instance, related to some anisotropic or stochastic fluctuations at microscopic levels. This is the case which occurs for example in Stochastic Gravity, where a noise kernel bi-tensor modelling the stress-energy tensor fluctuations is added to the semiclassical Einstein equations, obtaining in this way the so-called Einstein-Langevin equations (see \cite{Hu2020stoc} and reference therein). In this picture, the stochastic source in the past drives the fluctuations of the gravitational field, and thus gives origin to the external perturbation which enters the cosmological linearized semiclassical Einstein equations as external source.

In this model with external sources, the cosmological counterpart of the linearized semiclassical equation \eqref{eq:SCE-eq1-source} should be taken into account, with parameters $\lambda_1,\lambda_2,g_1,g_2$ fixed as in \eq \eqref{eq:cosmo-param} and satisfying the inequality \eqref{eq:cosmo-condition}. 
Under these assumptions, and based on the results shown in Theorem \ref{theo:retarded} and Remark \ref{rem:cuts}, the linearized curvature solution $R$ depends on both the contributions due to the poles and the branch cuts of $S(z)$. However, the contribution arising from poles are not present for several, apparently more physically acceptable values of the parameters $(m^2,\xi,a)$: for example, for sufficiently large ratio $m_P^2/m^2$ and
\[
	0 < \xi < 1/6, \qquad \alpha_3 > 0,  \qquad a > -4m^2,
\]
the condition \eqref{eq:cosmo-condition} holds. 
Furthermore, with this choice of parameters, the past compact linearized solution induced by a smooth compactly supported source $f$ has no poles contribution, and hence it decays to zero for large times according to the results stated in Theorem \ref{theo:retarded}.

\section{Conclusions}
\label{sec:conc}

In this paper, we have analyzed the stability problem of semiclassical theories in flat spacetime, using a toy model consisting of a quantum scalar field coupled to a second entirely classical scalar field. This toy model mimics other semiclassical theories of gravity described by the semiclassical Einstein equations, where a quantum matter field propagates over a classical curved background. 
It is known that higher order derivatives appearing in the semiclassical equations can destabilize the system, giving rise exponentially growing linearized solutions (see the references given in the \hyperref[sec:intro]{Introduction}). 
The main result stated in this paper consists of proving that, if the quantum field driving the back-reaction is massive, then the stability of background solutions can be restored at the linear order in the interaction, for spatially compact perturbations and for large values of the coupling constants, after assuming some sufficient (but not necessary) conditions on the parameters of the theory. On the other hand, removing the assumption of massive quantum fields seems to give rise to runaways solutions which may alter stability, in accordance with other results present in the literature about semiclassical theories.

Namely, it is shown that unique solutions of this semiclassical initial value problem tend to disperse in time, namely at fixed position in space they decay polynomially in time. 

As this toy model mimics the dynamics of the linearized semiclassical Einstein equations, our analysis indicates a possible mechanism to get stability in several linearized semiclassical theories of gravity, even for different conditions than the ones stated in this paper. For example, in the case of a semiclassical theory in cosmological spacetimes, which has been already investigated by the authors in \cite{Pinamonti2011init,Pinamonti2015glob,Meda2021exi}. 

\bigskip

\subsection*{Acknowledgements}
We thank the anonymous referees for helpful comments on an earlier version of this paper. The work of P.M. was supported by a PhD scholarship of the University of Genoa. We are grateful for the support of the National Group of Mathematical Physics (GNFM-INdAM).

\bigskip

\appendix

\section{Large time decays of certain functions}
\label{sec:appendix}

This appendix contains a Lemma with the proof of the decay at large times of certain functions. The main idea of this proof is already known in the literature, see, e.g., \cite{Bros2002thermal}, however 
since this Lemma is a key result for the stability discussed in the main text, we report
its proof here for completeness.
\begin{lemma}\label{le:decay}
Let  $\tilde{f} \in \mathcal{S}(\mathbb{R}^3)$ be a Schwartz function, and consider its spherical average
$f(|\vec{p}|) = \frac{1}{4\pi}\int \tilde{f}(\vec{p})  \d \Omega$, where $\Omega$ denotes the standard measure on the two-dimensional surface of the unit sphere.  Consider  
\[
	W(t) = \int_{0}^\infty {f(p)} \frac{\e^{iw_{m^2}t}}{w_{m^2}}   p^2 \d p, \qquad w_{m^2} = \sqrt{p^2+m^2},
\]
then $W(t)$ decays for large times as $1/t^{3/2}$ if $m>0$, and as $1/t^2$ if $m=0$.
\end{lemma}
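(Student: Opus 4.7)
The plan is to reduce $W(t)$ to a one-dimensional Fourier-type integral on a half-line and then read off the decay rate from the singularity structure at the endpoint. First I would change variable from $p$ to $w = \sqrt{p^2+m^2}$, noting $\d w = (p/w)\d p$ and $(p^2/w)\d p = \sqrt{w^2-m^2}\,\d w$, so that
\[
W(t) = \int_m^\infty g(w)\,\e^{iwt}\,\d w, \qquad g(w) \defeq f\!\left(\sqrt{w^2-m^2}\right)\sqrt{w^2-m^2}.
\]
Since $\tilde f \in \mathcal S(\mathbb R^3)$, the spherical average $f$ is smooth and rapidly decaying on $[0,\infty)$, hence $g$ is smooth on $(m,\infty)$ and of rapid decay at infinity. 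The decay of $W(t)$ for large $t$ is therefore dictated by the behaviour of $g$ at the endpoint $w=m$.

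For $m>0$: near $w=m$ one has $\sqrt{w^2-m^2}=\sqrt{w-m}\sqrt{w+m}$, so $g(w)\sim f(0)\sqrt{2m}\sqrt{w-m}$ as $w\downarrow m$, i.e.\ $g$ has an integrable branch-type singularity of order $(w-m)^{1/2}$. I would pick a smooth cutoff $\chi$ supported in a small neighbourhood of $m$ and split $g = \chi g + (1-\chi)g$. The piece $(1-\chi)g$ is smooth, supported in $[m+\delta,\infty)$ with rapid decay, so iterated integration by parts yields $O(t^{-N})$ for every $N$. For the singular piece I would write $\chi(w)g(w)=a(w)\sqrt{w-m}$ with $a$ smooth and compactly supported, and rescale $w-m=u/t$:
\[
\int_m^{m+\delta} a(w)\sqrt{w-m}\,\e^{iwt}\,\d w
= \frac{\e^{imt}}{t^{3/2}}\int_0^{\delta t}a(m+u/t)\sqrt{u}\,\e^{iu}\,\d u,
\]
and show the remaining integral is bounded uniformly in $t$ (e.g.\ by inserting a regulariser $\e^{-\epsilon u}$, using that $\int_0^\infty \sqrt{u}\,\e^{iu}\,\d u$ converges as an oscillatory integral), which gives the $t^{-3/2}$ decay.

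For $m=0$: the integral becomes $W(t)=\int_0^\infty f(w)\,w\,\e^{iwt}\,\d w$. Setting $g(w)=f(w)w$ extended by zero to $w<0$, $g$ is continuous with $g(0)=0$ but $g'(0^+)=f(0)$ in general, so $g$ has a jump in its first derivative at the endpoint. I would integrate by parts twice: the first step gives no boundary contribution since $g(0)=0$, while the second step produces the boundary term $-g'(0)/(it)^2=f(0)/t^2$ plus a remainder $(it)^{-2}\int_0^\infty g''(w)\,\e^{iwt}\,\d w$, which is $O(t^{-2})$ because $g''$ is integrable on $[0,\infty)$ (again using that $f$ is of Schwartz type). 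This yields the claimed $1/t^2$ decay.

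The main obstacle is the $(w-m)^{1/2}$ branch singularity in the massive case: unlike the massless case it cannot be removed by integration by parts, and the decay exponent $3/2$ is fractional rather than integer. The cleanest way to handle it is the rescaling argument sketched above, which is equivalent to invoking the known homogeneous distribution identity $\widehat{(w-m)_+^{1/2}}\propto t^{-3/2}\e^{imt}$ together with a partition of unity argument to localise around the singularity.
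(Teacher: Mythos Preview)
Your proposal is correct and follows essentially the same route as the paper. The massless case is identical (two integrations by parts plus Riemann--Lebesgue on the remainder), and in the massive case both you and the paper extract the $t^{-3/2}$ by the rescaling $w-m=u/t$ (the paper writes this directly as $p=\sqrt{(y/t+2m)(y/t)}$); the only cosmetic difference is that you localise the endpoint with a smooth cutoff $\chi$ while the paper subtracts the endpoint value $g(0)$ and estimates the difference.
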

\begin{proof}
Let us start discussing the massless case $m=0$, In this case $w_0=p$ then
\[
	W(t) = \int_{0}^\infty {f(p)} \e^{i p t}   p \d p = \frac{1}{(it)^2}\int_{0}^\infty {f(p)}  p  \partial_p^2  \e^{i p t}   \d p.
\]
After integrating by parts twice, and using the rapid decay of $f \in \mathcal{S}(\mathbb{R}^3)$, 
\begin{equation}
\label{eq:W(t)}
	W(t) = \frac{f(0) }{(it)^2} + \frac{f(0) }{(it)^2} \int_{0}^\infty  \partial_p^2 ( pf(p))  \e^{i p t}   \d p,
\end{equation}
where $\partial_p^2 ( pf(p)) \in L^1(\mathbb{R}^+)$ because $f(p)$, together with its derivatives, is of rapid decrease. Hence, by Riemann-Lebesgue lemma 
\[
	\lim_{t\to\infty }\int_{0}^\infty  \partial_p^2 ( pf(p))  \e^{i p t}   \d p =0,
\]
which implies that the second contribution in \eq \eqref{eq:W(t)} vanishes more rapidly than $1/t^2$ for large times.

\medskip
We pass now to analyze the case $m>0$. After changing variable of integration $p=\sqrt{(y/t+2m)(y/t)}$, 
\[
	W(t) = \frac{\e^{imt}}{t^{3/2}} \int_{0}^\infty {f\left(\sqrt{\frac{y}{t}\left(\frac{y}{t}+2m\right)}\right)} {\e^{iy }} \sqrt{\frac{y}{t}+2m}  \sqrt{y} \d y.
\]
To evaluate this integral, we insert an $\epsilon$ regulator, and we divide the integral in two parts, 
\[
	W(t) = \frac{e^{imt}}{t^{3/2}}f(0) \sqrt{2m}  \lim_{\epsilon\to 0^+} \int_{0}^\infty \e^{iy -\epsilon y }  \sqrt{y} \d y +  \frac{\e^{imt}}{t^{3/2}} \lim_{\epsilon\to 0^+} \int_{0}^\infty \left(g(\frac{y}{t})-g(0)\right) \e^{iy - \epsilon y} \sqrt{y} \d y,
\]
where $g({y})= \sqrt{{y}+2m}f(\sqrt{{y}({y}+2m)})$. Notice that the integral in the first contribution tends to a constant in the limit $\epsilon \to 0$, because
\[
	\lim_{\epsilon \to 0^+} \int_{0}^\infty \e^{iy -\epsilon y }  \sqrt{y} \d y = \frac{\sqrt{\pi}}{2 (-i)^{3/2}}. 
\]
The second contribution decays faster then $1/t^{3/2}$. Actually, 
consider
\[
	A \doteq 
	\lim_{\epsilon\to 0^+} \int_{0}^\infty \left(g(\frac{y}{t})-g(0) \right) \e^{iy} \e^{-\epsilon y}\sqrt{y} \d y
\]
which is equal to 
\[
	A = \lim_{\epsilon\to 0^+} \frac{1}{(i-\epsilon)^2}\int_{0}^\infty \sqrt{y} \left(g(\frac{y}{t})-g(0)  \right) \partial_{y}^2 \left( e^{iy -\epsilon y}-1\right) \d y.
\]
After integrating by parts twice, and in view of the decay properties of $g$ and its derivatives for large arguments, 
\[
	A = \lim_{\epsilon\to 0^+} \frac{1}{(i-\epsilon)^2}\int_{0}^\infty \partial_{y}^2 \sqrt{y} \left(g(\frac{y}{t})-g(0)  \right) \left(\e^{iy -\epsilon y}-1\right) \d y.
\]
Hence, for $0<\delta<1/2$ we obtain that
\[
	A = \lim_{\epsilon\to 0^+} \frac{1}{(i-\epsilon)^2} \frac{1}{t^{\delta}}\int_{0}^\infty h_{\delta}\left(\frac{y}{t}\right) \frac{\left(\e^{iy -\epsilon y}-1\right)}{y^{3/2-\delta}} \d y,
\]
where
\[
	h_\delta(x) \doteq \left(x\right)^{3/2-\delta}\partial_{x}^2 \left(\sqrt{x} \left(g(x)-g(0)\right) \right)
\]
is a bounded function, because ($g(x)-g(0)$) vanishes as $x$ for $x$ near $0$, and $g$ decays faster to 0 for large $x$. Therefore, 
\[
	|A| \leq \lim_{\epsilon \to 0^+} \frac{C}{t^{\delta}} \int_{0}^\infty \left|\frac{\left( \e^{iy -\epsilon y}-1\right)}{y^{3/2-\delta}}\right| \d y
\]
for a suitable constant $C$. Since $C$ is uniform in $\epsilon$, and $0<\delta<1/2$, we can apply dominated convergence theorem to take
the limit as $\epsilon\to 0$ before computing the integral, and hence
\[
	|A| \leq  \frac{\tilde{C}}{t^{\delta}},
\]
which concludes the proof.

\end{proof}

\printbibliography

\end{document}